\numberwithin{equation}{section}
\theoremstyle{plain}
\newtheorem{theorem}{Theorem}
\newtheorem{corollary}[theorem]{Corollary}
\newtheorem{lemma}[theorem]{Lemma}
\newtheorem{assumption}[theorem]{Assumption}
\begin{document}

\title[Long Range Lattice Gas Models]{Lattice Gas Models with Long Range Interactions}

\author{David Aristoff}
\address
{Department of Mathematics \newline
\indent Colorado State University \newline
\indent 1874 Campus Delivery \newline
\indent Fort Collins, CO-80523 \newline
\indent United States of America}
\email{
aristoff@math.colostate.edu}

\author{Lingjiong Zhu}
\address
{Department of Mathematics \newline
\indent Florida State University \newline
\indent 1017 Academic Way \newline
\indent Tallahassee, FL-32306 \newline
\indent United States of America}
\email{
zhu@math.fsu.edu}

\date{31 October 2016.}

\subjclass[2010]{82B20, 82B26}  
\keywords{Statistical mechanics, lattice gas models, long range interactions, phase transitions.}

\begin{abstract}
We study microcanonical lattice gas models with long range 
interactions, including 
power law interactions. 
We rigorously obtain a variational 
principle for the entropy. 
In a one dimensional example, 
we find a first order phase transition 
by proving the entropy is 
non-differentiable along a 
certain curve. 
\end{abstract}

\maketitle
\section{Introduction}

In this article we study lattice gas models with certain long range pair interactions. 
Our models are generalizations of 
certain mean field and random graph 
models, in which all sites or nodes interact with all 
others with equal strength. 
In contrast with mean field models, 
we allow the interaction strength 
to decay, but at a rate sufficiently 
slow that interactions between 
far away sites are still significant.

Our models can be described as follows. 
Consider the cubic lattice in $d$ dimensions, 
rescaled so that the spacing 
between adjacent sites is $1/n$. 
We consider configurations of 
particles on the $\sim n^d$ 
lattice sites 
that fit inside the  
$d$-dimensional unit 
cube ${\mathcal C}=[0,1]^d$. 
Each configuration consists 
of an arrangement of particles 
on the lattice sites in ${\mathcal C}$. 
We say a site is occupied if there 
is a particle there;  
each lattice site is 
either occupied 
by one particle or unoccupied. 
Configurations are assigned an 
energy from a pair potential 
$\psi:[0,\sqrt{d}] \to {\mathbb R}$. 
Interactions are only between 
occupied sites. 
The energy associated with 
distinct occupied sites
$x$ and $y$ is $\psi(|x-y|)$, and   
the total energy of a configuration 
is obtained by summing $\psi(|x-y|)$ 
over all occupied sites $x,y$. 
We assume there is $0<r<d$ such
that $\psi(t)$ 
diverges at least as slowly as $t^{-r}$ 
as $t \to 0$.
As a consequence, 
the interactions 
between sites that are far apart 
(relative to the lattice spacing) 
make nontrivial contributions 
to the total energy.
Such interactions are 
sometimes called {\em long range}~\cite{barre_bouchet,bouchet,campa2,dauxois}. Equivalence of  
thermodynamic ensembles 
breaks down in this regime~\cite{barre,barre_mukamel}, 
so we will consider 
only the microcanonical ensemble, 
in which energy density and site 
occupancy density are held fixed.

When $\psi$ is a constant function, 
our model is a microcanonical 
version of the
Curie-Weiss mean field 
Ising model. When $d = 2$, our 
model is related to certain 
random graphs. This is because
the occupancy pattern on the lattice 
can be mapped to an adjacency matrix 
for a graph on $n$ nodes, where an occupied 
site is an edge in the graph, 
and an unoccupied site corresponds 
to the absence of an edge. In 
this case, $\psi$ corresponds 
to an interaction between graph edges. 
For discussion on the connection 
to the Curie-Weiss model and 
random graphs, see the comments in Section~\ref{sec:discussion} below.  

We now give a rough 
description of our main 
results. Each configuration 
can be described by an occupancy 
pattern -- a function 
with value $1$ at each occupied 
site and $0$ at each unoccupied 
site. More precisely, 
each configuration corresponds to  
an occupancy pattern $f$ defined on ${\mathcal C}$ 
as follows: if site $x$ is occupied (resp. unoccupied), $f$ is equal to $1$ (resp. $0$) 
on a 
$d$-dimensional cube of side length $1/n$ 
centered at $x$. The associated particle 
density is $\int_{\mathcal C} f(x)\,dx$ and the energy 
density can be approximated by
\begin{equation*}
n^d\int_{{\mathcal C}^2} f(x)f(y)\psi(|x-y|)\,dx\,dy.
\end{equation*} 
The potential $\psi$ 
is divided by $n^d$ so 
that  
the energy density 
scales appropriately as $n \to \infty$. 
In this limit, there is a continuum of lattice sites in ${\mathcal C}$, and $f$ becomes an occupancy
density function with values in $[0,1]$. 
One can imagine that such $f$ is obtained 
by smoothing out occupancy pattern 
functions for large finite~$n$, 
with $f(x)$ representing 
the probability that site $x$ 
is occupied. The entropy density
associated with the occupancy density 
$f$ is
\begin{equation}\label{entropy_density}
- \int_{{\mathcal C}}\left[f(x)\log f(x) + 
(1-f(x))\log(1-f(x))\right]\,dx.
\end{equation}
The entropy density is simply 
the log of the 
number of configurations with 
density approximately given by~$f$, 
normalized by $n^d$. 
The above heuristics show that as 
$n\to \infty$, the energy density, 
entropy density, and particle 
density can be accurately expressed in 
terms of the occupancy density 
$f:{\mathcal C} \to {\mathbb R}$. 
If these heuristics are correct, 
one expects that as $n\to \infty$, at fixed energy and particle density, 
configurations will have  
occupancy densities that 
approach optimizers $f_*$ of the
entropy density~\eqref{entropy_density}
subject to the constaints on 
particle and energy density. 
It 
is then  
straightforward to write 
the corresponding
Euler-Lagrange equations and, 
in principle at least, find the optimizers $f_*$.
For details on the 
above ideas from the point 
of view of large deviations 
theory, see for instance~\cite{barre_bouchet}. 

The main contribution of this 
paper is to show that the 
above heuristics can be 
made mathematically rigorous 
under very weak assumptions 
on the potential energy $\psi$. 
(Though we focus on lattice gas 
models, our arguments easily 
adapt to other long 
range interacting models, 
{\em e.g.} the 
$\alpha$-Ising model~\cite{barre}. 
See also~\cite{Mori0} for a closely related result.)
We also give an example 
of an interaction for 
which the model has a 
first order phase transition. 
To our knowledge, such transitions 
had not yet been rigorously demonstrated in 
microcanonical models of this type.

Rigorous results in 
the long range setting 
described above are relatively 
scarce.
We mention that similar rigorous
results have been proved 
in microcanonical spin 
models~\cite{Mori0}
and in the grand canonical Ising model with 
Kac interactions~\cite{benois1,benois,Vollmeyer}. 
See also~\cite{barre,campa}
for similar work on 
the $\alpha$-Ising model, 
and~\cite{mori1,mori2} for 
studies of more general long 
range interacting Ising models.
For rigorous analysis of other
mean field type 
models, see~\cite{benois,costenuic,ellis,kac}. 
We also mention related work 
on random graph 
models in which the interaction
depends on the number of 
edges and other subgraphs; 
see~\cite{aristoff_zhu,
diaconis,kenyon,kenyon2,radin_ren_sadun,radin_sadun,
radin_sadun2}.

This article is organized as follows. In Section~\ref{sec:models} 
we describe our models in detail. 
We
present 
a variational principle for the 
entropy and the corresponding 
Euler-Lagrange equations 
in Section~\ref{sec:large_dev}. 
Using these results, 
we show a phase transition 
in a one dimensional model 
in Section~\ref{sec:example}. 
In Section~\ref{sec:discussion}, 
we discuss connections between 
our models and certain mean field and 
random graph models. 
All proofs are in Section~\ref{sec:appendix}. 

\section{Notation and assumptions}\label{sec:models}

Fix a dimension $d \ge 1$, and 
for $n\ge 1$ define the lattice 
$\Lambda_n = \{1,\ldots,n\}^d$. 
Lattice sites (that is, elements of $\Lambda_n$) will be denoted 
by $I,J$. Each lattice site $I$ 
can be occupied or not. 
A particle configuration is 
an assignment of occupancy to 
each site. More precisely, 
a particle configuration is a 
function
$\eta:\Lambda_n \to \{0,1\}$. 
(We sometimes write $\eta_n$ 
to emphasize dependence 
on $n$.) Here, $\eta(I) = 1$ if 
site $I$ is occupied, and $\eta(I) = 0$ 
otherwise. Recall 
$\psi:[0,\sqrt{d}] 
\to {\mathbb R}$ is a given 
pair potential. The interaction between 
sites $I,J \in \Lambda_n$ 
is defined by 
\begin{equation}\label{phin}
\phi_n(I,J) := \psi\left(n^{-1}|I-J|\right),
\end{equation}
where $|\cdot|$ is 
the usual Euclidean norm in ${\mathbb R}^d$.
Recall ${\mathcal C} = [0,1]^d$ 
is the $d$-dimensional unit cube. 
Let ${\mathcal C}_I$
be a $d$-dimensional cube of 
side length $1/n$ centered 
(approximately) at $n^{-1}I$. 
More precisely,
${\mathcal C}_I= \{x \in {\mathcal C}\,:\,I-1 \le nx < I\}$, 
where $1$ represents the all 
ones vector and the inequalities 
are componentwise. 
Throughout, we will associate 
a particle configuration $\eta$
to a occupancy density function 
$f^\eta$ obtained by setting 
$f^\eta$ equal to $1$ on cubes ${\mathcal C}_I$
corresponding to occupied sites $I$, 
and $0$ otherwise. 
More precisely, $f^\eta:{\mathcal C} 
\to [0,1]$ is defined by 
\begin{equation*}
f^\eta(x) = \eta(I), \qquad \text{if }x \in {\mathcal C}_I.
\end{equation*}
See Figure~\ref{fig1}. Let ${\mathbb P}_n$ be 
the uniform probability measure 
on particle configurations, 
\begin{equation*}
{\mathbb P}_n(\eta) = 2^{-n^d}, \qquad \text{for all }\eta:\Lambda_n \to \{0,1\}.
\end{equation*}
This defines an equivalent
measure on occupancy density functions. 
That is, under the map $\eta \to f^\eta$, ${\mathbb P}_n$ 
pushes forward to a probability measure 
on the space of measurable 
functions ${\mathcal C} \to [0,1]$. We denote this measure by
the same symbol, 
${\mathbb P}_n$, since no confusion 
should arise. Define 
the {energy density} $E_n$  
of $\eta:\Lambda_n \to \{0,1\}$ as 
the sum of $\phi_n(I,J)$ over all 
pairs of occupied sites $I,J$, appropriately normalized:
\begin{equation}\label{E}
E_n(\eta) = n^{-2d}\sum_{I,J \in \Lambda_n} 
\eta(I) \eta(J) \phi_n(I,J). 
\end{equation}
Define also the particle density $N_n$ 
of $\eta:\Lambda_n \to \{0,1\}$ as 
the average site occupancy:
\begin{equation}\label{N}
N_n(\eta) = n^{-d}\sum_{I \in \Lambda_n} \eta(I).
\end{equation}
Fix parameters $\xi, \rho\in {\mathbb R}$, and
define the microcanonical {entropy}
\begin{equation}\label{entropy}
S(\xi,\rho) = \lim_{\delta \to 0^+}
\lim_{n\to \infty} n^{-d} \log {\mathbb P}_n\left(E_n(\eta) \in (\xi-\delta,\xi+\delta), \,N_n(\eta) \in (\rho-\delta,\rho+\delta)\right).
\end{equation}
We show below the limit 
defining $S(\xi,\rho)$ exists under the 
following assumption.

\begin{figure}
\begin{center}
\vskip-120pt
\includegraphics[scale=0.55]{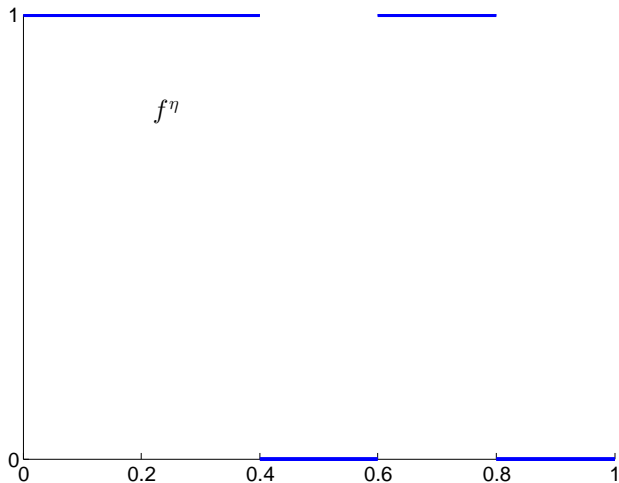}
\vskip-120pt
\caption{The graph of 
the function $f^\eta$ 
when $d=1$, $n = 5$,  
$\eta(1) = \eta(2) = \eta(4)= 1$ and $\eta(3)=\eta(5) = 0$.}\label{fig1}
\end{center}
\end{figure}

\begin{assumption}\label{A1}
The map 
$(x,y) \mapsto \psi(|x-y|)$ is 
in $L^q({\mathcal C}^2)$ 
for some $q > 1$. Moreover, 
it is Riemann 
integrable. 
\end{assumption}
Assumption~\ref{A1} will hold 
throughout the remainder of the paper. 
As a typical example 
of interactions satisfying 
this assumption, we keep in mind the case 
of power law interactions in 
dimension $d =1$, where $\psi(t) = t^{-r}$ for 
$t \in (0,1]$ and $\psi(0) = 0$, 
with $r \in (0,1)$ constant. (We 
set $\psi(0) = 0$ simply so that 
a particle does not interact 
with itself.) We note, however,  
that the interaction need 
not even be continuous. 
In Section~\ref{sec:example} below, 
we consider a modified 
power law interaction for 
which the entropy $S$ is singular.

Before proceeding we comment 
on boundary conditions. Note 
that the definitions~\eqref{E}-~\eqref{N} 
suggests free boundary conditions. 
However, we note that periodic 
boundary conditions can be obtained 
by replacing Euclidean 
distance $|\cdot|$ in~\eqref{phin} with 
distance on the flat torus ${\mathbb R}^d/{\mathbb Z}^d$. 
In dimension $d =1$, this corresponds 
to setting $\psi(t) = \psi(1-t)$ 
for all $t \in [0,1]$. 
See~\cite{barre} for a 
similar discussion 
of boundary conditions in 
the $\alpha$-Ising model.

\section{Large deviations, entropy, 
and Euler-Lagrange equations}\label{sec:large_dev}

Before stating our results 
we give a sketch of the 
arguments. First, we show that 
the formula~\eqref{entropy_density} 
represents the logarithm of the 
number of configurations with 
occupancy density approximately 
equal to $f$. The relevant result  
is Theorem~\ref{thm_large deviation principle} below.
Roughly, for 
large $n$ and
suitable collections $A$ of 
particle configurations, we show that
\begin{align}\begin{split}\label{large deviation principle}
&n^{-d}\log {\mathbb P}_n(f^\eta \in A) \approx \\
&\qquad \sup_{f \in A} \left\{-\log 2 - \int_{{\mathcal C}} \left[f(x)\log f(x)+(1-f(x))\log(1-f(x))\right]\,dx\right\}.
\end{split}
\end{align}
The extra term $\log 2$ comes from the probability 
normalization. The equation~\eqref{large deviation principle} follows from 
a {\em large deviations principle}, and the 
quantity in brackets (multiplied by $-1$) is called 
the {\em rate function}.
See below for precise definitions 
of this terminology.

The trick is to prove 
the large deviations principle in a topology strong 
enough so that the set  
of suitable collections contains 
the collection $A$ we are interested in.  
Since we want to compute the 
microcanonical 
entropy~\eqref{entropy}, 
we take $A$ to be the collection
of configurations with energy 
density and particle density
approximately equal to $\xi$ and $\rho$, 
respectively. 
If the (approximate) energy density 
$\int_{{\mathcal C}^2} f(x)f(y)\psi(|x-y|)\,dx\,dy$ and particle density 
$\int_{\mathcal C}f(x)\,dx$ 
are continuous in $f$, then $A$ is 
indeed suitable and we can use~\eqref{large deviation principle} 
to compute the microcanonical 
entropy~\eqref{entropy}. This 
is a consequence of the well-known 
{\em contraction principle}~\cite{dembo,VaradhanII} in 
large deviations theory. (We 
state the contraction principle 
in  
Section~\ref{sec:appendix} below.)
It turns out that particle density 
is continuous in any reasonable topology, 
but it is not trivial to 
show energy density is continuous
in a suitable topology. 
See Lemma~\ref{lem_cts} below.
There are some additional 
technical issues associated with 
showing the energy density can be 
well approximated by $\int_{{\mathcal C}^2} f(x)f(y)\psi(|x-y|)\,dx\,dy$ in 
the sense of exponential 
equivalence~\cite{VaradhanII} 
(this term is defined 
in Section~\ref{sec:appendix}).
See Lemma~\ref{lem_equiv} below.

We are now ready to state our results. 
For notational convenience, we write
\begin{equation*}
H_{\textup{bin}}(t) = \begin{cases}t \log t + (1-t)\log(1-t)+\log 2, & t \in [0,1]\\
\infty, & t \notin [0,1]\end{cases}.
\end{equation*}
(We write $H_{bin}$ here because  
of the similarity to 
the binary entropy function 
$-t\log_2 t  -(1-t)\log_2 (1-t)$.)
Before proceeding, we introduce 
the terminology we need from large 
deviations theory. A sequence 
${\mathbb Q}_n$ of probability
measures on a topological space 
${\mathcal T}$ is said to satisfy a {large deviations principle} 
with speed
$a_n$ and rate function 
$K : {\mathcal T} \to {\mathbb R}$ 
if $K$ is non-negative and lower semicontinuous, and for any
measurable set $A\subset {\mathcal T}$,
\begin{equation}\label{large deviation principle_ineq}
-\inf_{x \in A^\circ} K(x) 
\le \liminf_{n \to \infty} 
a_n^{-1}\log {\mathbb Q}_n(A) 
\le \limsup_{n \to \infty} 
a_n^{-1}\log {\mathbb Q}_n(A) 
\le -\inf_{x \in {\bar A}} K(x),
\end{equation}
where $A^\circ$ denotes the 
interior of $A$ and ${\bar A}$ 
the closure of $A$. We refer 
to the first inequality 
in~\eqref{large deviation principle_ineq} as the 
{lower bound} and the 
last inequality in~\eqref{large deviation principle_ineq} 
as the {upper bound}. Note 
that compared to the description 
above, we have replaced $\sup -K$ 
with $-\inf K$. This is 
so that we can be consistent with 
standard notations in large 
deviations theory.

Throughout we fix $p \in [1,\infty)$ 
and $q \in (1,\infty]$ with $p^{-1}+q^{-1}=1$. We 
prove a large deviations principle for ${\mathbb P}_n$ 
on the Banach space 
of functions in $L^p({\mathcal C})$ 
endowed with the weak 
topology. We denote 
this space by ${\mathcal X}$. 
We will also consider the 
subset ${\mathcal Y} = \left\{f \in {\mathcal X}\,:\, 
f(x) \in [0,1] \text{ for a.e. }x\right\}\subset {\mathcal X}$.
Unless otherwise specified, we endow
${\mathcal Y}$ with the 
subspace topology. 

\begin{theorem}\label{thm_large deviation principle}
The sequence ${\mathbb P}_n$ 
satisfies a large deviations 
principle on ${\mathcal X}$ with 
speed $n^d$ and rate function 
\begin{equation*}
H(f) = \int_{{\mathcal C}} H_{\textup{bin}}(f(x))\,dx.
\end{equation*}
\end{theorem}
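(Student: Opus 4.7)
The plan is to build the LDP from Cram\'er's theorem applied to finite-dimensional projections, in the spirit of a Dawson--G\"artner projective limit argument adapted to the weak topology on $L^p$.

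First I would check that $H$ is a good rate function on $\mathcal{X}$. Since $H_{\textup{bin}}$ is convex and continuous on $[0,1]$, $H$ is convex and strongly lower semicontinuous on $L^p(\mathcal{C})$ by Fatou's lemma, hence weakly lower semicontinuous by Mazur's theorem. The set $\mathcal{Y}$ is convex, strongly closed, bounded in $L^\infty$, and uniformly integrable, so weakly compact in $L^p$ (by reflexivity when $p>1$, and by Dunford--Pettis when $p=1$); since $\{H<\infty\}\subseteq\mathcal{Y}$ and $H$ is weakly lower semicontinuous, the level sets of $H$ are weakly compact. Because $\mathbb{P}_n$ is supported on $\mathcal{Y}$, exponential tightness is automatic.

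Next, for each $N\ge 1$ I would introduce the partition of $\mathcal{C}$ into $N^d$ cubes $\{B_k\}$ of side $1/N$ and the weakly continuous map $T_N:\mathcal{X}\to\mathbb{R}^{N^d}$ defined by $T_N(f)_k=\int_{B_k}f\,dx$. When $N\mid n$, each coordinate $T_N(f^\eta)_k$ is $|B_k|$ times the empirical mean of $(n/N)^d$ i.i.d.\ Bernoulli$(1/2)$ variables, independent across $k$, so Cram\'er's theorem gives an LDP for $T_N(f^\eta)$ at speed $n^d$ with rate function
\[
J_N(\alpha)=\sum_k |B_k|\,H_{\textup{bin}}\bigl(\alpha_k/|B_k|\bigr).
\]
Jensen's inequality yields $J_N(T_N f)\le H(f)$ for every $f\in\mathcal{Y}$, while a martingale convergence argument applied to the conditional expectations $\mathbb{E}[f\mid\sigma(\{B_k\})]$ along a refining sequence of partitions gives $J_N(T_N f)\nearrow H(f)$.

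For the upper bound, given a weakly closed $C\subseteq\mathcal{X}$, the image $T_N(C\cap\mathcal{Y})$ is compact in $\mathbb{R}^{N^d}$ as the continuous image of a weakly compact set, and the finite-dimensional upper bound gives
\[
\limsup_{n\to\infty}n^{-d}\log\mathbb{P}_n(f^\eta\in C)\le -\inf_{f\in C\cap\mathcal{Y}} J_N(T_N f)
\]
for every $N$. Passing to the supremum over $N$ requires a minimax step: I would extract near-minimizers $f_N\in C\cap\mathcal{Y}$, use weak compactness of $C\cap\mathcal{Y}$ to pull out a subsequential weak limit $f_\infty\in C\cap\mathcal{Y}$, and invoke weak continuity of each $T_M$ together with lower semicontinuity of $J_M$ to conclude that $\sup_N\inf_{C\cap\mathcal{Y}}J_N\circ T_N\ge J_M(T_M f_\infty)$ for every fixed $M$; letting $M\to\infty$ bounds this by $H(f_\infty)\ge \inf_C H$. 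For the lower bound, any basic weak neighborhood of $f\in U$ with $H(f)<\infty$ is cut out by finitely many $L^q$ functionals, each of which can be approximated in $L^q$ by a step function adapted to a sufficiently fine partition $\mathcal{P}_N$; since every element of $\mathcal{Y}$ has $L^p$-norm at most $1$, this shows $U$ contains a cylinder $\{g\in\mathcal{Y}:T_N g\in O_N\}$ with $O_N$ open around $T_N f$, and Cram\'er's lower bound combined with $J_N(T_N f)\le H(f)$ gives the required liminf.

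The principal obstacle is the upper-bound minimax step: interchanging the supremum over partitions with the infimum over $C$ is exactly what forces the weak topology, since it is the weak compactness of $C\cap\mathcal{Y}$ and the weak lower semicontinuity of $J_M\circ T_M$ that allow the identification $\sup_N J_N(T_N f_\infty)=H(f_\infty)$ at the subsequential limit.
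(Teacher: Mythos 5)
Your argument is correct in outline but takes a genuinely different route from the paper's. The paper proves this via Baldi's theorem (the Banach-space Gärtner--Ellis theorem, Theorem 4.5.3 and 4.5.20 of Dembo--Zeitouni): it computes the limiting scaled log-moment generating functional $H^*(g)=\int_{\mathcal C}\log(\tfrac12+\tfrac12 e^{g(x)})\,dx$ (Lemma~\ref{lem_Hstar}), identifies its Legendre transform with $H$ via Lemma~\ref{lem_FL}, obtains the upper bound from exponential tightness of ${\mathbb P}_n$ on the weakly compact set $\mathcal Y$, and then gets the lower bound by exhibiting an exposing hyperplane $h_f=\log f-\log(1-f)$ for each $f$ in the ``interior'' $\mathcal F$ together with a truncation argument showing $\inf_{U\cap\mathcal F}H=\inf_U H$. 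You instead reduce to finite-dimensional Cram\'er LDPs for the block-averaged occupancies $T_N f^\eta$ and pass to the limit of refining partitions, in the spirit of Dawson--G\"artner. Both are standard and correct; the paper's route is shorter because a single MGF computation does all the work, whereas yours makes the finite-dimensional (Sanov-like) structure explicit, which is arguably more transparent about where the convexity of $H_{\textup{bin}}$ enters (Jensen, $J_N\le H$) and about why the weak topology on $L^p$ is the natural one (it is exactly the projective topology generated by the maps $T_N$). Two details you should tighten: (i) the exact identification of $T_N(f^\eta)_k$ with a scaled empirical mean of independent Bernoullis holds only when $N\mid n$, so you need an interpolation or exponential-equivalence step to cover general $n$ before applying Cram\'er; and (ii) the monotonicity $J_M(T_Mf)\le J_N(T_Nf)$ used in the minimax step requires that $\mathcal P_N$ refine $\mathcal P_M$, so the supremum over $N$ should be taken along a nested (e.g.\ dyadic) sequence of partitions rather than over all $N$. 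Both are routine but should be stated, since otherwise the minimax interchange is not justified.
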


Consider the 
following constrained subset of ${\mathcal Y}$, 
\begin{equation*}
{\mathcal Y}_{\xi,\rho} := \left\{
f \in {\mathcal Y}\,:\,
\int_{{\mathcal C}^2} 
f(x)f(y)\psi(|x-y|)\,dx\,dy = \xi, \,
\int_{\mathcal C}f(x)\,dx = \rho\right\}.
\end{equation*}
Abusing notation, we refer to 
$\int_{{\mathcal C}^2} 
f(x)f(y)\psi(|x-y|)\,dx\,dy$ 
as the energy density, 
even though when $f = f^\eta$ 
this expression is 
not exactly equal to~\eqref{E}. 
We show
in Lemma~\ref{lem_equiv} below 
that they are nonetheless close in the 
sense of exponential equivalence 
(this term is defined precisely 
above Lemma~\ref{lem_FL} in 
Section~\ref{sec:appendix} below).
Note that
$\int_{\mathcal C}f^\eta(x)\,dx$
is exactly equal to the particle density 
of $\eta$ defined in~\eqref{N}. 
Thus, we think of ${\mathcal Y}_{\xi,\rho}$ 
as the collection of 
occupancy density functions $f$
with energy density $\xi$ and particle 
density $\rho$. 
As discussed above, exponential equivalence,  Theorem~\ref{thm_large deviation principle}, and the contraction principle lead  
to the following variational 
expression for the entropy.
\begin{theorem}\label{thm_var}
We have
\begin{equation}\label{var_prob}
S(\xi,\rho) = -\inf_{f \in {\mathcal Y}_{\xi,\rho}} 
H(f) = \sup_{f \in {\mathcal Y}_{\xi,\rho}} \left[-
H(f)\right],
\end{equation}
with the infimum over the empty set 
equal to $\infty$ by convention.
\end{theorem}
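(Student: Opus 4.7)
The plan is to deduce Theorem~\ref{thm_var} from Theorem~\ref{thm_large deviation principle} via the contraction principle, after identifying the (approximate) energy and particle densities with continuous functionals of the occupancy density and controlling the discrepancy between the exact sum $E_n(\eta)$ and its integral counterpart.

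First I would introduce the map $\Phi:{\mathcal Y}\to{\mathbb R}^2$ defined by
\begin{equation*}
\Phi(f) := \left(\int_{{\mathcal C}^2} f(x)f(y)\psi(|x-y|)\,dx\,dy,\; \int_{\mathcal C} f(x)\,dx\right).
\end{equation*}
The second component is a bounded linear functional of $f\in L^p({\mathcal C})$, hence weakly continuous on ${\mathcal X}$. Continuity of the first component on ${\mathcal Y}$ in the weak topology is exactly the statement of Lemma~\ref{lem_cts}. Restricting to ${\mathcal Y}$ is natural since $H\equiv+\infty$ off ${\mathcal Y}$, and ${\mathcal Y}$ is weakly compact by Banach--Alaoglu (as a weakly closed subset of the unit ball of $L^\infty({\mathcal C})\subset L^p({\mathcal C})$).

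Next I would invoke Lemma~\ref{lem_equiv} to replace $E_n(\eta)$ with $\int_{{\mathcal C}^2} f^\eta(x)f^\eta(y)\psi(|x-y|)\,dx\,dy$ up to exponential equivalence at speed $n^d$. Because $N_n(\eta)=\int_{\mathcal C} f^\eta(x)\,dx$ exactly, this shows that the pair $(E_n(\eta),N_n(\eta))$ is exponentially equivalent to $\Phi(f^\eta)$, so the two families of random variables satisfy the same LDPs on ${\mathbb R}^2$. Combining Theorem~\ref{thm_large deviation principle} with the contraction principle, the law of $(E_n,N_n)$ under ${\mathbb P}_n$ satisfies an LDP with speed $n^d$ and rate function
\begin{equation*}
K(\xi,\rho) \;:=\; \inf\!\left\{H(f)\,:\, f\in {\mathcal Y},\;\Phi(f)=(\xi,\rho)\right\}\;=\;\inf_{f\in{\mathcal Y}_{\xi,\rho}}H(f),
\end{equation*}
with $\inf\emptyset=+\infty$ by convention.

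Finally, I would transfer this LDP to the two-parameter limit defining $S(\xi,\rho)$. Writing $B_\delta=(\xi-\delta,\xi+\delta)\times(\rho-\delta,\rho+\delta)$, the LDP gives
\begin{align*}
-\inf_{B_\delta}K \;&\le\; \liminf_{n\to\infty} n^{-d}\log {\mathbb P}_n\!\big((E_n,N_n)\in B_\delta\big) \\
&\le\; \limsup_{n\to\infty} n^{-d}\log {\mathbb P}_n\!\big((E_n,N_n)\in B_\delta\big) \;\le\; -\inf_{\overline{B_\delta}}K.
\end{align*}
The left-hand infimum is at most $K(\xi,\rho)$ for every $\delta>0$, while lower semicontinuity of $K$ (guaranteed by the contraction principle) forces $\inf_{\overline{B_\delta}}K\to K(\xi,\rho)$ as $\delta\to 0^+$. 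Letting $\delta\to 0^+$ sandwiches both the $\liminf$ and $\limsup$ at $-K(\xi,\rho)$, so the inner $n$-limit in~\eqref{entropy} exists (up to $o_\delta(1)$) and the outer $\delta$-limit yields $S(\xi,\rho)=-K(\xi,\rho)$, which is~\eqref{var_prob}. The main difficulty lies entirely in the imported inputs: Lemma~\ref{lem_cts} (weak $L^p$ continuity of the quadratic energy functional, which is delicate because bilinear forms generally fail to be weakly continuous) and Lemma~\ref{lem_equiv} (showing that diagonal and boundary discrepancies in $E_n(\eta)$, potentially problematic because of the singularity of $\psi$ at the origin, are negligible at the exponential scale $n^d$). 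Granted those, the rest is a routine application of the contraction principle.
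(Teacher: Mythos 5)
Your proposal is correct and follows the same route as the paper: Theorem~\ref{thm_large deviation principle} plus Lemma~\ref{lem_cts} (weak continuity of the energy and particle-density functionals on ${\mathcal Y}$), Lemma~\ref{lem_equiv} (exponential equivalence of $E_n$ with its integral version), and the contraction principle. The paper states this as an immediate consequence without spelling out the $\delta\to 0^+$ limiting argument you include; your elaboration of that final sandwich step is accurate and adds nothing that diverges from the paper's intended proof.
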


We note that a very similar 
rigorous result was recently proved, 
using direct arguments,  
in~\cite{Mori0}. In our 
proof, we use 
the machinery of 
large deviations theory, 
proving a large deviations 
principle for ${\mathbb P}_n$ 
and using the contraction 
principle and exponential 
equivalence to get a 
variational principle 
for the entropy. 
Compared to the result 
in~\cite{Mori0}, our 
assumptions on the interaction $\psi$
are weaker; in particular, 
we allow for interactions 
$\psi$ that are non-smooth 
away from $0$. On the 
other hand, the article~\cite{Mori0}
considers different domain 
shapes as well as more general 
short range interactions. While 
many of our arguments could be 
generalized in this way, we do 
not pursue this direction, 
partly because 
of our interest in the connection 
of our problem with random graph 
models (in which a square domain 
represents an adjacency matrix).

Below we will refer to 
functions $f_* \in {\mathcal Y}_{\xi,\rho}$ with $S(\xi,\rho) = -H(f_*)$ 
as optimizers of the 
variational problem~\eqref{var_prob}. 
Optimizers represent the most 
likely structure of large 
particle configurations. 
For instance, if 
$f_*$ is the unique 
optimizer of~\eqref{var_prob} 
and $n$ is large, then 
$f_*( n^{-1}I)$ is roughly the 
probability that 
$\nu(I) = 1$, {\em i.e.}, there 
is a particle at site $I \in \Lambda_n$.

Standard results in 
the calculus of variations 
lead to the following. Whenever $(\xi,\rho)$ 
corresponds to acheivable values 
of energy and particle density, 
compactness arguments show that optimizers of the variational problem~\eqref{var_prob} exist; moreover
optimizers 
in the interior of the 
appropriate function space 
satisfy the Euler Lagrange equations. 
To make these statements precise, we 
define 
\begin{equation*}
 \Omega = 
\left\{(\xi,\rho)\,:\, {\mathcal Y}_{\xi,\rho} \ne \emptyset\right\},
\end{equation*}
as the region of achievable energy 
and particle densities, and write
\begin{equation*}
{\mathcal F} = \left\{f \in {\mathcal Y}\,:\, \exists \epsilon>0 \text{ s.t. }
f(x) \in [\epsilon,1-\epsilon] \text{ for a.e. }x\right\}
\end{equation*} 
for the interior of ${\mathcal Y}$ 
with respect to the essential sup norm.

\begin{theorem}\label{thm_EL}
Optimizers 
of~\eqref{var_prob} exist whenever 
$(\xi,\rho) \in \Omega$. 
If $
f_* \in {\mathcal F}\cap {\mathcal Y}_{\xi,\rho}$
is an optimizer, then for a.e. $x$, 
either
\begin{equation}\label{EL_rearrange}
f_*(x) = \frac{\exp\left(\mu + \beta\int_{\mathcal C} f_*(y)\psi(|x-y|)\,dy\right)}{1+\exp\left(\mu + \beta\int_{\mathcal C} f_*(y)\psi(|x-y|)\,dy\right)},
\end{equation}
for some $\beta,\mu \in {\mathbb R}$ 
with $(\beta,\mu)\ne (0,0)$, or 
\begin{equation}\label{EL_rearrange2}
\int_{\mathcal C} f_*(y)\psi(|x-y|)\,dy \equiv \xi/\rho.
\end{equation}
\end{theorem}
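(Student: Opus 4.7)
The plan is to prove the two assertions separately: existence via the direct method of the calculus of variations, and the Euler--Lagrange dichotomy via Lagrange multipliers with constraint qualification.

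For existence, fix $(\xi,\rho) \in \Omega$ and take a minimizing sequence $f_n \in {\mathcal Y}_{\xi,\rho}$ for $H$. Since ${\mathcal Y}$ sits inside the unit ball of $L^\infty({\mathcal C})$ it is norm-bounded in $L^p({\mathcal C})$, so Banach--Alaoglu produces a weakly convergent subsequence $f_n \rightharpoonup f_*$ in ${\mathcal X}$. The set ${\mathcal Y}$ is convex and norm-closed in $L^p$, hence weakly closed, so $f_* \in {\mathcal Y}$. The particle-density functional $f \mapsto \int_{\mathcal C} f$ is a continuous linear functional and hence weakly continuous; the energy-density functional is weakly continuous on ${\mathcal Y}$ by Lemma~\ref{lem_cts}; together these give $f_* \in {\mathcal Y}_{\xi,\rho}$. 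Finally $H$ is convex and strongly l.s.c.\ (by Fatou applied to $H_{\textup{bin}}$), hence weakly l.s.c., so $H(f_*) \le \liminf_n H(f_n)$ and $f_*$ is an optimizer.

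For the Euler--Lagrange equations, let $f_* \in {\mathcal F} \cap {\mathcal Y}_{\xi,\rho}$ be an optimizer and set $h(x) := \int_{\mathcal C} f_*(y)\psi(|x-y|)\,dy$. I would split into two cases according to whether $h$ is a.e.\ constant. If it is, integrating against $f_*$ gives $\xi = \int_{\mathcal C} f_*(x)h(x)\,dx = \rho \cdot h$, forcing $h \equiv \xi/\rho$, which is \eqref{EL_rearrange2}. Otherwise, the first-order tangent directions to ${\mathcal Y}_{\xi,\rho}$ at $f_*$ are exactly those $g \in L^\infty({\mathcal C})$ with $\int_{\mathcal C} g = 0$ and $\int_{\mathcal C} g h = 0$. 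For such $g$, since $f_* \in {\mathcal F}$ has essential range in $[\epsilon,1-\epsilon]$, the perturbation $f_* + \epsilon g$ stays in ${\mathcal F}$ for small $\epsilon$; the implicit function theorem (applied to the smooth constraint map into ${\mathbb R}^2$, whose linearization is surjective precisely because $1$ and $h$ are linearly independent) then yields an actual curve in ${\mathcal Y}_{\xi,\rho}$ through $f_*$ with tangent $g$. First-order stationarity forces $\int_{\mathcal C} H_{\textup{bin}}'(f_*(x)) g(x)\,dx = 0$ for every such $g$. Since $H_{\textup{bin}}'(t) = \log(t/(1-t))$, the annihilator description of $\text{span}\{1,h\}^\perp$ gives $\log(f_*(x)/(1-f_*(x))) = \mu + \beta h(x)$ a.e., and non-constancy of $h$ rules out the trivial solution $(\beta,\mu)=(0,0)$. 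Rearranging yields \eqref{EL_rearrange}.

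The main obstacle I anticipate is the constraint-manifold step in the Lagrange-multiplier argument: merely showing that $f_* + \epsilon g$ meets the particle constraint exactly and the energy constraint to first order is not enough to compare $H$-values at optimizers, so one must upgrade the infinitesimal variation to a genuine curve in ${\mathcal Y}_{\xi,\rho}$. This is exactly where $f_* \in {\mathcal F}$ (providing interior room in ${\mathcal Y}$) and the non-constancy of $h$ (providing surjectivity of the constraint linearization) combine via the implicit function theorem. The two alternatives in the theorem reflect precisely the presence or failure of constraint qualification: the degenerate case $h \equiv \xi/\rho$ is exactly where the two constraint gradients become linearly dependent and no nontrivial multipliers are available.
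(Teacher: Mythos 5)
Your proposal reaches the same conclusion but by a somewhat different route. For existence, you run the direct method with a minimizing sequence and weak lower-semicontinuity of $H$; the paper instead observes that ${\mathcal Y}_{\xi,\rho}$ is compact (compactness of ${\mathcal Y}$ was established in the proof of Theorem~\ref{thm_large deviation principle}, and the constraint functionals are continuous by Lemma~\ref{lem_cts}) and $H$ is lower semicontinuous, so a minimizer exists. The two arguments are essentially equivalent. For the Euler--Lagrange dichotomy, the paper invokes the abnormal (Fritz John) multiplier rule, Theorem 9.1 of~\cite{clarke}, which yields $(\beta,\mu,\nu)\ne(0,0,0)$ with $\nu\in\{0,1\}$, and dichotomizes on $\nu$: the case $\nu=1$ rearranges to \eqref{EL_rearrange}, and $\nu=0$ forces $\beta\ne 0$, giving \eqref{EL_rearrange2}. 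You instead dichotomize on whether the constraint map is a submersion at $f_*$, i.e.\ whether $h:=\int_{\mathcal C} f_*(y)\psi(|\cdot-y|)\,dy$ is a.e.\ constant, and in the non-degenerate case you rebuild the multiplier rule from scratch via the implicit function theorem plus an annihilator argument. That is a legitimate, more self-contained alternative; the paper's citation of Clarke buys brevity, while your version makes the role of constraint qualification explicit. The Fr\'{e}chet-differentiability hypotheses needed to invoke Clarke's theorem are exactly what your implicit-function-theorem step also requires, so the technical bookkeeping is comparable.

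One step in your argument is not right. You claim that ``non-constancy of $h$ rules out the trivial solution $(\beta,\mu)=(0,0)$.'' Non-constancy of $h$ does not give you that. If $f_*\equiv 1/2$ then $H_{\textup{bin}}'(f_*)\equiv 0$, and with $h$ non-constant the \emph{unique} pair satisfying $\log(f_*/(1-f_*))=\mu+\beta h$ a.e.\ is $(\beta,\mu)=(0,0)$, so non-constancy of $h$ makes nonzero multipliers harder, not easier, to find. What non-constancy of $h$ actually rules out is a nonzero $\beta$ with $\mu+\beta h$ constant, which is a different statement. The correct condition for $(\beta,\mu)\ne(0,0)$ in your non-degenerate branch is $f_*\not\equiv 1/2$, and the configuration $f_*\equiv 1/2$ with $h$ non-constant is not vacuous: it occurs at $(\xi,\rho)=(\lambda/4,1/2)$ under free boundary conditions, where $f_*\equiv 1/2$ is the unique optimizer. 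To be fair, the paper's proof does not verify $(\beta,\mu)\ne(0,0)$ in the $\nu=1$ branch either, so this is a shared gap rather than one you introduced; but your stated justification for the nonvanishing of the multipliers is incorrect as written.
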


Viewing the expression in~\eqref{EL_rearrange} as 
a convolution leads to the following corollary.

\begin{corollary}\label{cor_cts}
If the Euler-Lagrange equation~\eqref{EL_rearrange} holds, 
then $f_*$ is continuous.
\end{corollary}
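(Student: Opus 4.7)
The plan is to interpret~\eqref{EL_rearrange} as $f_*(x) = \sigma(\mu + \beta g(x))$, where $\sigma(t) := e^t/(1+e^t)$ is smooth on ${\mathbb R}$ and
\begin{equation*}
g(x) := \int_{{\mathcal C}} f_*(y)\psi(|x-y|)\,dy.
\end{equation*}
Since $\sigma$ is continuous, the corollary reduces to showing that $g$ is continuous on ${\mathcal C}$; at that point $f_*$ may be identified with its continuous representative $\sigma(\mu + \beta g(\cdot))$.

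To prove continuity of $g$, I would exhibit it as a convolution. Extend $\psi(|\cdot|)$ by zero to $\Psi:{\mathbb R}^d \to {\mathbb R}$ supported in $\{|z|\le \sqrt{d}\}$, and extend $f_*$ by zero outside ${\mathcal C}$. Then $g(x) = (\Psi * f_*)(x)$ for $x \in {\mathcal C}$, and using $\|f_*\|_\infty \le 1$,
\begin{equation*}
|g(x+h) - g(x)| \le \int_{{\mathbb R}^d} |\Psi(z+h) - \Psi(z)|\,dz.
\end{equation*}
By continuity of translation in $L^1({\mathbb R}^d)$, the right-hand side tends to $0$ uniformly in $x$ as $|h|\to 0$ provided $\Psi \in L^1({\mathbb R}^d)$, which yields uniform continuity of $g$ on ${\mathcal C}$, and hence of $f_*$.

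The main, mild obstacle is verifying $\Psi \in L^1({\mathbb R}^d)$. Passing to polar coordinates reduces this to $\int_0^{\sqrt{d}}|\psi(t)|t^{d-1}\,dt<\infty$. Near $t=0$, the standing growth hypothesis $\psi(t) = O(t^{-r})$ with $r<d$ gives $|\psi(t)|t^{d-1}\le C t^{d-1-r}$, integrable at the origin since $d-1-r>-1$. Away from the origin, the Riemann integrability of $(x,y)\mapsto\psi(|x-y|)$ on ${\mathcal C}^2$ from Assumption~\ref{A1} forces $\psi$ to be bounded on any set of the form $[\epsilon,\sqrt{d}]$, so $|\psi(t)|t^{d-1}$ is trivially integrable there. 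Combining the two regions yields $\Psi \in L^1({\mathbb R}^d)$ and closes the argument.
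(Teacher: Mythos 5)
Your proof is correct and follows essentially the same route as the paper's: recognize $\int_{\mathcal C} f_*(y)\psi(|x-y|)\,dy$ as a convolution of the bounded function $f_*$ with the integrable kernel $\psi(|\cdot|)$, conclude continuity of that convolution, and then compose with the continuous logistic function via~\eqref{EL_rearrange}. The paper simply asserts local integrability of $\psi(|\cdot|)$ (which already follows from the $L^q\subset L^1$ part of Assumption~\ref{A1}), whereas you verify it via the $t^{-r}$ growth bound from the introduction; both are fine, and your translation-continuity argument supplies the details the paper leaves implicit.
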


Intuitively, equation~\eqref{EL_rearrange2} 
holds only when the constraints 
take over in the optimization (see 
the discussion following 
Theorem 9.1 in~\cite{clarke}), so 
in ``most'' cases we expect 
that instead the Euler-Lagrange equations~\eqref{EL_rearrange} hold. 
Corollary~\ref{cor_cts} has 
the following interesting 
consequence. Suppose 
that $(\xi,\rho) \in \Omega$ 
and $\xi \ne \lambda \rho^2$, 
where $\lambda:=\int_{{\mathcal C}^2} \psi(|x-y|)\,dx\,dy$. 
Then the constant valued occupation 
density $f \equiv \rho$ cannot 
be an optimizer, since it has energy density
\begin{equation*}
\int_{{\mathcal C}^2} f(x)f(y)\psi(|x-y|)\,dx\,dy
= \rho^2 \int_{{\mathcal C}^2} \psi(|x-y|)\,dx\,dy = \lambda \rho^2 \ne \xi.
\end{equation*}
Thus, when $\xi \ne \lambda \rho^2$, 
any optimizer $f_*$ of $S(\xi,\rho)$
must be nonconstant. 
Suppose such $f_*$
satisfies the Euler-Lagrange 
equations~\eqref{EL_rearrange}. 
Then $f_*$ is continuous and 
nonconstant, say with 
two distinct values $a<b$, so 
Corollary~\ref{cor_cts} 
and the intermediate value 
theorem show that
$f_*$ takes every value in 
the interval $[a,b]$. 
In particular, 
$f_*$
cannot be constant or 
piecewise constant. 
Such optimizers $f_*$ have  
a spacially inhomogeneous 
occupation density profile. 
(See~\cite{barre} and Figure~\ref{fig2} below
for examples where the 
optimizer has a curved structure.) 
Note the contrast with
typical short range interactions, 
for which optimizers of the 
entropy have a spatially 
homogeneous (in 
pure phases) or piecewise 
homogenous (in 
mixed phases) density profile 
as system size goes to infinity. 

It is also interesting to consider 
the case of constant valued interactions. 
Suppose $\psi \equiv \lambda$ is constant. Then one of the constraints 
is redundant: if $\int_{\mathcal C} f(x)\,dx = \rho$ then 
$$\int_{{\mathcal C}^2}f(x)f(y)\psi(|x-y|)\,dx\,dy = \lambda \rho^2.$$ Thus, 
particle density $\rho$ completely 
determines energy density $\xi$ 
via $\xi = \lambda \rho^2$. 
In this case, it is easy 
to see that the only optimizer of 
the entropy 
$S(\lambda \rho^2,\rho)$ is 
the constant function
$f_* \equiv \rho$. 

\section{Singularity of the entropy in a one dimensional example}\label{sec:example}

\begin{figure}
\begin{center}
\vskip-100pt
\includegraphics[scale=0.55]{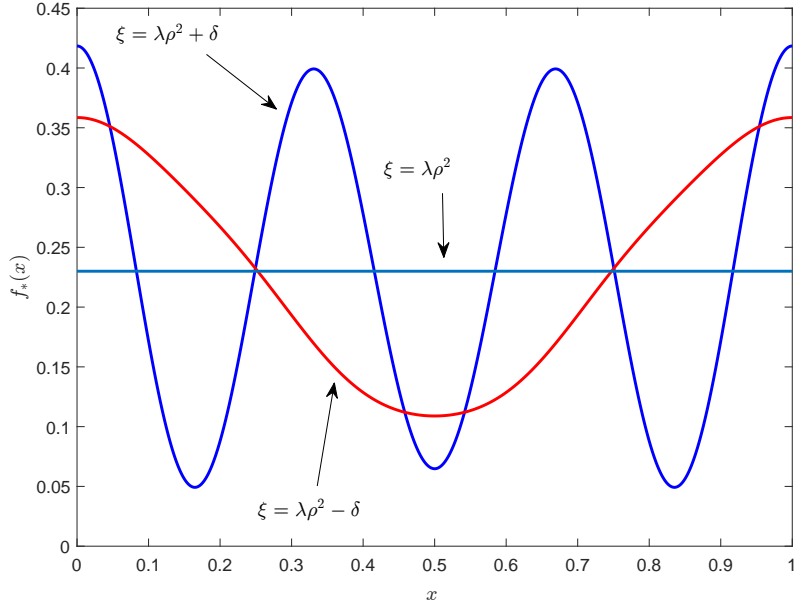}
\vskip-100pt 
\caption{Optimizers $f_*$ of $S(\xi,\rho)$ 
(computed numerically) when $\psi$ is given by Assumption~\ref{A2} with $r = 1/2$ and $M = 10$.  The plots 
show optimizers $f_*$ of $S(\xi,\rho)$ at 
fixed $\rho = 0.23$ and $3$ different 
values of $\xi$: on 
the transition curve ($\xi = \lambda \rho^2$) as well 
as just below ($\xi = \lambda \rho^2-\delta$) and just above 
($\xi = \lambda \rho^2+\delta$) the transition curve (here $\delta = 0.02$).}\label{fig2}
\end{center}
\end{figure}

Here we consider an 
example in dimension 
$d=1$ in which the entropy 
$S$ is singular. 
We will consider $\psi$ with 
the following modified 
power law structure.
\begin{assumption}\label{A2}
For some constants $r \in (0,1)$
and $M>0$,
\begin{equation*}
\psi(t) = \begin{cases} t^{-r}, 
& 0 < t < 1/4 \\ 
M, & 1/4 \le t \le 1/2\end{cases},
\end{equation*}
and $\psi(0) = 0$.
Also, $\psi$ is symmetric: 
for each $t \in [0,1]$, 
$\psi(t) = \psi(1-t)$. 
\end{assumption} 
Note that symmetry of $\psi$ 
corresponds to periodic 
boundary conditions for the 
particle configurations, {\em i.e.}, 
particle configurations on a circle. 
Clearly, $\psi$ satisfies Assumption~\ref{A1}.
If $M$ is chosen carefully, 
then 
at a given particle 
density, at 
high energy configurations 
tend to be multimodal, while 
at low energy 
configurations tend to 
be unimodal; see Figure~\ref{fig2}. 
The switch from unimodal 
to multimodal structure 
corresponds to a singularity 
in the entropy, as we 
show in Theorem~\ref{thm_trans} 
below. To make 
this argument rigorous, 
we need two ingredients. 
First, we identify 
where the interface between 
unimodal and multimodal structure 
should occur. 
The simplest guess is
that the interface corresponds 
to parameter values $(\xi,\rho)$ 
at which the optimizers are 
constant 
valued occupation densities $f \equiv \rho$. 
This guess turns out to be correct, 
as we show  
below. And second, we have to verify 
that parameters $(\xi,\rho)$ on both sides 
of this interface are achievable, 
so that the transition interface 
is in the interior of $\Omega$. We 
prove this in Lemma~\ref{lem_int} 
below. 

Before proceeding with the proof we introduce 
some notation. We write  
\begin{equation*}
\lambda = \int_{[0,1]^2} \psi(|x-y|)\,dx\,dy
\end{equation*}
for the integrated interaction function, 
and we define
\begin{equation*}
 \xi(f) = \int_{[0,1]^2} f(x)f(y)\psi(|x-y|)\,dx\,dy.
\end{equation*}
When $\xi = \lambda \rho^2$, 
the constant function 
$f \equiv \rho$ satisfies 
the constraints and is 
therefore an optimizer of 
the entropy. The curve 
$\xi = \lambda\rho^2$ is 
the interface between unimodal 
and multimodal optimizers discussed 
above, and it corresponds 
to a singularity in 
the microcanonical entropy, as we show in 
Theorem~\ref{thm_trans} below.
We now show this interface 
lies in the interior of $\Omega$, 
at least for a range of densities 
$\rho$.
\begin{lemma}\label{lem_int}
For each $r \in (0,1)$, 
there is an interaction $\psi$
satisfying Assumption~\ref{A2} 
with the following property. There is $\epsilon > 0$ 
such that the curve 
\begin{equation*}
\{(\xi,\rho) \in \Omega\,:\, \xi = \lambda\rho^2, 
\,\rho \in (1/4-\epsilon, 1/4)\}
\end{equation*}
is in the interior of $\Omega$.  
\end{lemma}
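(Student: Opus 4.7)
The plan is to perturb the constant configuration $f\equiv\rho_0$ in two mean-zero directions which respectively raise and lower the energy density, while shifting the particle density by an independent parameter. A direct computation, using the identity $\int_0^1\psi(|x-y|)\,dy=\lambda$ (which holds for every $x$ thanks to $\psi(t)=\psi(1-t)$), shows that if $f=\rho_0+a+\phi$ with $\int\phi=0$, then
\begin{equation*}
\int_0^1 f = \rho_0 + a, \qquad \xi(f) = \lambda(\rho_0+a)^2 + Q(\phi),
\end{equation*}
where $Q(\phi) := \int_{[0,1]^2}\phi(x)\phi(y)\psi(|x-y|)\,dx\,dy$ and the cross term vanishes by mean-zero-ness. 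The construction then reduces to producing bounded mean-zero $g_+,g_-$ with $Q(g_+) > 0 > Q(g_-)$.

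The natural place to look is the Fourier spectrum of $\psi$, viewed as an even function on $\mathbb{R}/\mathbb{Z}$. By Parseval, $Q(\phi) = \sum_{k\ne 0}|\hat\phi(k)|^2\hat\psi(k)$ with $\hat\psi(k)=\int_0^1\psi(t)\cos(2\pi kt)\,dt$, and splitting via Assumption~\ref{A2} (combining the two copies of the singular piece using $\psi(t)=\psi(1-t)$) gives
\begin{equation*}
\hat\psi(k) = 2\int_0^{1/4}t^{-r}\cos(2\pi kt)\,dt + M\int_{1/4}^{3/4}\cos(2\pi kt)\,dt.
\end{equation*}
The second integral vanishes for even $k$ and equals $-\sin(\pi k/2)/(\pi k)$ for odd $k$, so $\hat\psi(2)$ carries no $M$-dependence; the substitution $s=4\pi t$ followed by the reflection $s\mapsto\pi-s$ on $[\pi/2,\pi]$ then rewrites
\begin{equation*}
\hat\psi(2) = 2(4\pi)^{r-1}\int_0^{\pi/2}\bigl[s^{-r}-(\pi-s)^{-r}\bigr]\cos s\,ds > 0
\end{equation*}
for every $r\in(0,1)$, since the bracket is positive and $\cos s>0$ on $(0,\pi/2)$. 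Meanwhile $\hat\psi(1)$ has the form (finite, $r$-dependent)$\,- M/\pi$, so I would fix $M$ large enough (depending on $r$) that $\hat\psi(1)<0$; the resulting $\psi$ still satisfies Assumption~\ref{A2}. Setting $g_+(x)=\cos(4\pi x)$ and $g_-(x)=\cos(2\pi x)$ yields $\|g_\pm\|_\infty=1$, $\int g_\pm=0$, and $Q(g_+)=\hat\psi(2)/2>0$, $Q(g_-)=\hat\psi(1)/2<0$.

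Fix $\epsilon\in(0,1/8)$ and take any $\rho_0\in(1/4-\epsilon,1/4)$. For $(a,s,t)\in(-\delta,\delta)^3$ with $\delta>0$ small (uniformly in $\rho_0$, since $\rho_0$ is bounded away from $0$ and $1$), the function $f_{a,s,t}:=\rho_0+a+sg_++tg_-$ lies in $[0,1]$ a.e., and the formula above becomes
\begin{equation*}
\xi(f_{a,s,t}) = \lambda(\rho_0+a)^2 + s^2 Q(g_+) + t^2 Q(g_-) + 2st\,Q(g_+,g_-).
\end{equation*}
Taking $t=0$ yields $\xi-\lambda(\rho_0+a)^2\in[0,\delta^2 Q(g_+))$; taking $s=0$ yields $\xi-\lambda(\rho_0+a)^2\in(\delta^2 Q(g_-),0]$. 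Because the set of admissible $\phi$ is convex (hence connected) and $\xi$ is continuous, the set of $\xi$ achievable at fixed $\rho=\rho_0+a$ is a connected interval containing $\lambda\rho^2$ strictly in its interior. With $c := \tfrac{1}{2}\delta^2\min(|Q(g_-)|,Q(g_+))>0$, the open set
\begin{equation*}
\bigl\{(\xi,\rho)\in\mathbb{R}^2 : |\rho-\rho_0|<\delta,\ |\xi-\lambda\rho^2|<c\bigr\}
\end{equation*}
is contained in $\Omega$ and contains $(\lambda\rho_0^2,\rho_0)$, so this point is in the interior of $\Omega$. Since $\rho_0$ was arbitrary in $(1/4-\epsilon,1/4)$, the entire curve lies in the interior.

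The principal difficulty is the uniform positivity of $\hat\psi(2)$ in $r$; the reflection about $s=\pi/2$ is what closes this cleanly, otherwise one would have to control possible cancellations between the near-singularity and intermediate scales. The remaining steps are standard variational estimates once the indefiniteness of $Q$ is in hand.
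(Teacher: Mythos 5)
Your argument is correct, and it takes a genuinely different route from the paper's. The paper works with explicit indicator-function test configurations $f_1=1_{[0,\rho]}$, $f_2\equiv\rho$, $f_3=1_{[0,\rho/2]}+1_{[1/2-\rho/2,1/2]}$, computes $\xi(f_i)$ by direct integration of the power law, and verifies $\xi(f_1)<\xi(f_2)<\xi(f_3)$ by elementary calculus (the key inequality reduces to $2-r<2^{r+1}$) plus taking $M$ large; the ``in between'' values of $\xi$ are obtained by convex combinations. You instead factor out the mean, reduce the whole question to the indefiniteness of the quadratic form $Q(\phi)=\int\phi(x)\phi(y)\psi(|x-y|)\,dx\,dy$ on mean-zero $\phi$, and diagonalize $Q$ on the circle via Parseval, so the sign analysis becomes a sign analysis of $\hat\psi(1)$ and $\hat\psi(2)$. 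The reflection $s\mapsto\pi-s$ cleanly forces $\hat\psi(2)>0$ for every $r\in(0,1)$, and $\hat\psi(1)<0$ for $M$ large; the openness of the achievable region near the curve $\xi=\lambda\rho^2$ then follows by the two-parameter perturbation $f=\rho_0+a+sg_++tg_-$ together with a connectedness/IVT step. Both proofs are valid. What the paper's approach buys is self-containedness and concreteness: the reader sees exactly which discontinuous occupancy profiles realize the competing energies. What yours buys is structure: you isolate the same quadratic form $Q$ that governs the singularity argument in Theorem~\ref{thm_trans} (there via the Hilbert--Schmidt operator $\Psi$ and its spectral radius), your sign argument for $\hat\psi(2)$ is uniform in $r$ by a transparent pointwise comparison rather than an ad hoc inequality, and the construction of an explicit open box inside $\Omega$ around each point of the curve is spelled out rather than left implicit. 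One small housekeeping point worth recording in a final write-up: the functions $g_\pm$ are not themselves occupancy profiles, so one should note (as you do) the uniform margin $\rho_0\in(1/8,1/4)$ and $\|g_\pm\|_\infty=1$ that keeps $f_{a,s,t}\in[0,1]$; and one should remark that the cross term $Q(g_+,g_-)$ actually vanishes by Fourier orthogonality, though your restriction to the axes $s=0$ and $t=0$ sidesteps the issue anyway.
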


\begin{figure}
\begin{center}
\vskip-120pt
\includegraphics[scale=0.55]{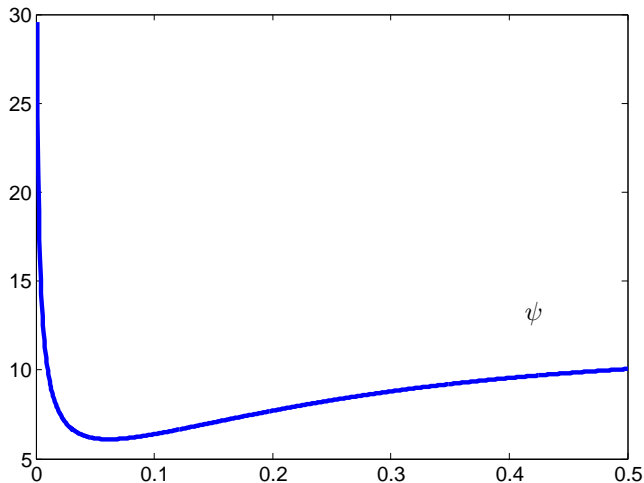}
\vskip-120pt
\caption{An example of a  
smooth interaction $\psi$ for 
which $S$ is singular as in 
Theorem~\ref{thm_trans}. 
Here we take 
periodic boundary conditions, 
{\em i.e.}, $\psi(t) = \psi(1-t)$.}\label{fig3}
\end{center}
\end{figure}

Lemma~\ref{lem_int} is proved by 
exhibiting functions $f$ which 
integrate to $\rho$ 
and have values of $\xi(f)$ 
both larger and smaller than $\xi(\rho) = \lambda\rho^2$. Such functions 
can be found for suitable $M$. 
We do not attempt to find 
the complete interior or boundary of $\Omega$. Fortunately, Lemma~\ref{lem_int} 
suffices for the following.

\begin{theorem}\label{thm_trans}
Let $\psi$ be as in 
Lemma~\ref{lem_int} with $r<1/2$. Then 
the entropy $S$ is non-differentiable 
along the curve $
\{(\xi,\rho) \in \Omega\,:\, \xi = \lambda\rho^2, 
\,\rho \in (1/4-\epsilon, 1/4)\}$.
\end{theorem}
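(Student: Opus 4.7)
My plan is to show that $(\lambda\rho^2, \rho)$ is a strict local maximum of $S(\cdot, \rho)$ with a linear-in-$|\xi - \lambda\rho^2|$ decay on both sides, producing a corner that precludes a $\xi$-derivative. Concretely, I will establish
\[
S(\xi, \rho) \;\leq\; -H_{\textup{bin}}(\rho) \;-\; \frac{2}{\lambda}\,|\xi - \lambda\rho^2|
\]
for $(\xi,\rho) \in \Omega$ in a neighborhood of the curve, with equality at $\xi = \lambda\rho^2$. Were $\partial_\xi S(\lambda\rho^2,\rho) = m$ to exist, evaluating at $\xi = \lambda\rho^2 + h$ would force $m h + o(h) \leq -(2/\lambda)|h|$, whence $m \leq -2/\lambda$ as $h \to 0^+$ and $m \geq 2/\lambda$ as $h \to 0^-$, a contradiction. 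Lemma~\ref{lem_int} guarantees $(\lambda\rho^2+h,\rho)\in\Omega$ for all sufficiently small $h$, so the bound is not vacuous and the argument applies at each $\rho \in (1/4 - \epsilon, 1/4)$.

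Equality at $\xi = \lambda\rho^2$ is immediate from Jensen's inequality: strict convexity of $H_{\textup{bin}}$ on $[0,1]$ forces $H(f) = \int H_{\textup{bin}}(f) \geq H_{\textup{bin}}(\rho)$ for any $f$ with $\int_\mathcal{C} f = \rho$, with equality iff $f \equiv \rho$ a.e.; and $f \equiv \rho$ lies in $\mathcal{Y}_{\xi,\rho}$ exactly when $\xi = \lambda\rho^2$, so Theorem~\ref{thm_var} yields $S(\lambda\rho^2,\rho) = -H_{\textup{bin}}(\rho)$.

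For the quantitative decay I combine two inequalities applied to an arbitrary $f = \rho + g \in \mathcal{Y}_{\xi,\rho}$, where $\int g = 0$ and $g(x) \in [-\rho, 1-\rho]$ a.e. First, Taylor's theorem with integral remainder and the uniform bound $H_{\textup{bin}}''(t) = 1/(t(1-t)) \geq 4$ on $(0,1)$ yield the pointwise convexity estimate
\[
H_{\textup{bin}}(\rho + s) \;\geq\; H_{\textup{bin}}(\rho) + H_{\textup{bin}}'(\rho)\,s + 2 s^2 \qquad (s \in [-\rho,\,1-\rho]);
\]
integrating in $x$ and invoking $\int g = 0$ gives $H(f) \geq H_{\textup{bin}}(\rho) + 2\,\|g\|_{L^2(\mathcal{C})}^2$. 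Second, the periodic boundary convention $\psi(t) = \psi(1-t)$ makes $x \mapsto \int_\mathcal{C} \psi(|x-y|)\,dy$ constant and equal to $\lambda$, so expanding $\xi(f) = \iint_{\mathcal{C}^2} (\rho + g(x))(\rho + g(y))\psi(|x-y|)\,dx\,dy$ and invoking $\int g = 0$ kills the cross terms to give
\[
\xi - \lambda\rho^2 \;=\; \iint_{\mathcal{C}^2} g(x)\,g(y)\,\psi(|x-y|)\,dx\,dy.
\]
Viewing this last expression as $\int g \cdot (\Psi \ast g)$ where $\Psi$ is the periodic extension of $\psi$ and $\ast$ is convolution on the torus, Young's inequality ($\|\Psi \ast g\|_2 \leq \|\Psi\|_1 \|g\|_2$ with $\|\Psi\|_1 = \lambda$) yields $|\xi - \lambda\rho^2| \leq \lambda \|g\|_2^2$, i.e., $\|g\|_2^2 \geq |\xi-\lambda\rho^2|/\lambda$. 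Combining the two bounds and taking the infimum over $f$ produces the advertised inequality on $S$.

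The nontrivial step is the vanishing of the cross terms in the expansion of $\xi(f)$, which depends essentially on the translation invariance forced by $\psi(t) = \psi(1-t)$; without this symmetry a linear-in-$g$ term would survive and the excess $H(f) - H_{\textup{bin}}(\rho)$ could only be bounded below by something quadratic in $|\xi-\lambda\rho^2|$, giving $\partial_\xi S(\lambda\rho^2,\rho) = 0$ rather than a corner. The hypothesis $r < 1/2$ enters only through Lemma~\ref{lem_int}, which places the curve in the interior of $\Omega$; the non-differentiability argument itself needs only $\psi \in L^1(\mathcal{C})$, which holds for any $r < 1$.
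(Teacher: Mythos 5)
Your proof is correct and follows the same overall strategy as the paper's: write $f = \rho + g$ with $\int g = 0$, get a lower bound $H(f) - H_{\textup{bin}}(\rho) \gtrsim \|g\|_{L^2}^2$ from strict convexity, show the cross terms vanish so $\xi - \lambda\rho^2 = \iint g(x)g(y)\psi(|x-y|)\,dx\,dy$, bound this bilinear form by $\|g\|_{L^2}^2$, and combine to get a corner in $\xi$. The one genuine difference is the bound on the bilinear form. The paper observes that the kernel $\psi$ lies in $L^2$ when $r < 1/2$, hence the convolution operator $\Psi$ is Hilbert--Schmidt on $L^2[0,1]$, hence bounded, with $|\iint g(x)g(y)\psi(|x-y|)\,dx\,dy| \le \sigma\|g\|_2^2$ where $\sigma$ is the spectral radius. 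You instead use Young's inequality for convolutions on the torus, $\|\Psi * g\|_2 \le \|\Psi\|_1\|g\|_2 = \lambda\|g\|_2$, which requires only $\psi \in L^1$ and yields the explicit constant $\lambda$ in place of $\sigma$. This buys you two things: a fully explicit bound $S(\xi,\rho) \le -H_{\textup{bin}}(\rho) - (2/\lambda)|\xi - \lambda\rho^2|$ (you also replace the paper's abstract convexity constant $c$ with the sharp $H_{\textup{bin}}'' \ge 4$), and it shows the hypothesis $r < 1/2$ is actually unnecessary for the non-differentiability argument itself. One small correction to your closing remark: $r < 1/2$ is \emph{not} needed for Lemma~\ref{lem_int} (the lemma holds for all $r \in (0,1)$); the paper imposes $r < 1/2$ precisely so that $\psi \in L^2$ and the Hilbert--Schmidt argument applies. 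Since your Young's-inequality bound replaces that step, your argument extends the theorem to all $r \in (0,1)$.
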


Note that we needed Lemma~\ref{lem_int} to 
show that the curve $\xi = \lambda \rho^2$ 
is in the interior of $\Omega$ 
for $\rho \in (1/4-\epsilon, 1/4)$; 
otherwise, the notion of a singularity 
along $\xi = \lambda \rho^2$ does 
not make sense.
Theorem~\ref{thm_trans} 
shows there is a first order phase 
transition, {\em i.e.}, a discontinuity  
in the first derivative of the 
entropy, across 
this curve.
The curve 
corresponds to optimizers that 
are constant valued. 
(Recall from Corollary~\ref{cor_cts} 
that optimizers must be non-constant  
off this curve.) The first order 
transition corresponds to a qualitative
change in 
the non-constant optimizers 
across the singularity, 
namely, a change from unimodal 
to multimodal structure. 

We choose $\psi$ above for 
simple arguments. 
Though $\psi$ is not 
continuous at $1/4$ in general, 
it will be clear that 
the results above 
also hold for a smoothed
version of $\psi$; 
see the remarks below the 
proof of Lemma~\ref{lem_int} 
in Section~\ref{sec:appendix}. 
Indeed, modified 
versions of the arguments 
in the proofs below will 
go through for suitable bimodal 
potentials $\psi$ with a 
$\psi(t) = \psi(1-t)$, 
including potentials with 
a shape like the Lennard-Jones 
potential~\cite{LJ} on $[0,1/2]$. See Figure~\ref{fig3}.

\section{Discussion}\label{sec:discussion}

An interesting connection between 
certain random graph models and lattice 
statistical mechanics models is found in the 
Curie-Weiss mean field 
Ising model. Consider the 
case where $d = 1$ and 
$\psi \equiv J$ is constant. Then the energy density becomes 
\begin{equation}\label{CW}
n^{-2}J\sum_{i,j=1}^n \eta(i)\eta(j).
\end{equation}
This is the same as the pair interaction energy in 
the Curie-Weiss model in dimension $d=1$~\cite{kochmanski}. Now consider
a random graph model where a 
graph $X = (X_{ij})_{1\le i,j\le n}$ is represented by 
its adjacency matrix: $X_{ij} = 1$ if 
there is an edge from $i$ to $j$, 
and $X_{ij} = 0$ otherwise. 
$X$ can be directed or undirected; 
if it is undirected $X_{ij} = 1$ implies 
$X_{ji} = 1$ and vice-versa. 
The energy density of $X$ is defined as 
\begin{equation}\label{star_ct}
n^{-3}\sum_{i=1}^n \sum_{j,k=1}^n X_{ij}X_{ik}. 
\end{equation}
Note that $X_{ij}X_{ik} = 1$ if and only 
if there is an edge from $i$ to $j$ 
and from $i$ to $k$ in $X$. 
Thus, the energy can be considered a count 
of the number of $2$-stars embedded 
in $X$ (if $X$ is directed, the $2$-stars are outward directed). In addition to the energy density, a particle density is defined as 
\begin{equation}\label{PD}
n^{-2}\sum_{i=1}^n \sum_{j=1}^n X_{ij}.
\end{equation}
Directed and undirected versions 
of this model have been studied in both 
the grand canonical~\cite{aristoff_zhu1,radin_yin} 
and microcanonical~\cite{aristoff_zhu,kenyon,kenyon2} 
setting. In some cases, 
the outer sums (over $i$) in~\eqref{star_ct} and~\eqref{PD} can be ``decoupled'' 
from the inner sums (over $j,k$). 
The inner sums, 
namely $n^{-2}\sum_{j,k=1}^n X_{ij}X_{ik}$ 
and $n^{-1}\sum_{j=1}^nX_{ij}$ 
after appropriate normalization,
look like the 
Curie-Weiss energy and particle 
density
in dimension $d=1$ (when $i$ is considered fixed), and for this reason 
the relevant free energies and 
entropies 
of such random graph models are 
closely related to the corresponding 
quantities in the Curie-Weiss model. 
See~\cite{kochmanski} for a description and 
analysis of the Curie-Weiss model 
and~\cite{aristoff_zhu1,aristoff_zhu,
kenyon,kenyon2,radin_yin} for details 
and discussion on the above mentioned 
random 
graph models. Our models differ 
from such random
graph models in that the 
interaction 
between edges is allowed to depend on 
the distance between the edges.

Another way to view our models 
is as follows. When $d =2$, 
a configuration $\eta:\{1,\ldots,n\}^2 \to \{0,1\}$
corresponds to the adjacency 
matrix of a directed graph: 
$\eta(i,j) = 1$ if there 
is an edge from $i$ to $j$ 
and $\eta(i,j) = 0$ otherwise. 
In this case, 
$\psi$ corresponds to 
an interaction between edges.  
If $\psi$ is nonconstant, 
it introduces a underlying 
geometry to the graphs. 
For instance, if $\psi$ is 
repulsive and $\eta(i,j) = 1$, 
then for fixed 
particle density and 
sufficiently low energy density, 
other edges are not likely to appear 
``near'' the directed edge 
$(i,j)$. (For 
clarity, we 
have defined ``near'' 
in the context of the Euclidean norm. 
However, inspection of Lemma~\ref{lem_cts} 
and Lemma~\ref{lem_equiv} 
show that our main result, Theorem~\ref{thm_var}, continues to hold 
when the Euclidean 
norm $|\cdot|$ is replaced with any other 
norm.) To see how $\psi$ might 
capture geometric features of 
graphs, consider the case of 
a repulsive potential with 
a cutoff, and assume particle density 
is fixed. Graphs at low energy density 
likely have lower connectivity, 
since pairs of edges at distance 
less than the cutoff are not 
likely to appear together; 
on the other hand, graphs at high
energy density may tend to cluster. 
Thus, we expect that the energy density is 
related to clustering and connectivity properties of 
the graphs. From a statistical 
perspective, $\psi$ 
allows us to capture second as 
well as first order statistics of graphs, 
for instance, edge correlations 
as well as mean edge density. 

When $d=2$, the limiting 
occupation density is related to a certain 
type of graph limit called {\em graphon}
~\cite{lovasz}. 
Formally, a {graphon} is a symmetric 
measurable function $g:[0,1]^2 \to [0,1]$. 
Intuitively, graphons $g$ represent 
an edge probability density: namely,
$g(x,y)$ represents the probability 
for an edge between $x$ and $y$, 
where $x,y \in [0,1]$ lie on a 
continuum of vertices. Interestingly, 
it has recently been shown that in certain 
random 
graph models where the 
densities of edges and 
certain embedded subgraphs (for instance, 
$2$-stars, as discussed above) 
are held constant, the  
graphon $g_*$ that optimizes entropy 
tends to form facets, that is, 
$g_*$ is either constant 
or piecewise constant (up to 
a relabeling of vertices); see 
for instance~\cite{kenyon,radin_sadun,radin_sadun2, radin_ren_sadun}. In contrast, 
we have shown in Corollary~\ref{cor_cts} 
that our optimizers $f_*$ must 
be continuous (provided they 
satisfy the Euler-Lagrange 
equations~\eqref{EL_rearrange}). The reason seems 
to be that the geometry 
associated with $\psi$ enforces
some regularity on the structure 
of the optimizers.

\section{Proofs}\label{sec:appendix}

Before proceeding with the proofs, 
we introduce some terminology 
from large deviations theory. 
A family of probability 
measures ${\mathbb Q}_n$ 
on ${\mathcal T}$ is called 
{\em exponentially tight} 
if all compact subsets of ${\mathcal T}$ 
are measurable and for every $M<\infty$, 
there is a compact set $C \subset {\mathcal T}$ such that 
$\lim_{n\to \infty}n^{-1}\log {\mathbb Q}_n({\mathcal T}\setminus C) < -M$. 
Two families $X_n$, $Y_n$ of real-valued  random 
variables defined on the same 
probability space are called 
{\em exponentially equivalent} 
with speed $n^d$ 
if for each $\delta > 0$, 
the event $\{|X_n-Y_n|>\delta\}$ 
is measurable with
$\lim\sup_{n \to \infty} 
n^{-d}\log {\mathbb P}(|X_n-Y_n|>\delta) = -\infty$. Given a topological vector
space ${\mathcal T}$ over ${\mathbb R}$ and a rate function $K:{\mathcal T} \to {\mathbb R}$,  an {\em exposing 
hyperplane} 
for $y \in {\mathcal T}$ is 
an element $\lambda \in {\mathcal T}^*$ 
such that 
$\langle \lambda, y\rangle - K(y) > 
\langle \lambda,z\rangle - K(z)$ 
for all $z \in {\mathcal T}$ 
with $z \ne y$ (here ${\mathcal T}^*$ 
denotes the dual space of ${\mathcal T}$ 
and $\langle \cdot,\cdot\rangle: {\mathcal T}^* \times {\mathcal T} \to {\mathbb R}$ 
the natural pairing). See~\cite{dembo} for details.

We restate the {\em contraction principle} 
from large deviations theory for 
our purposes as 
follows. (See Theorem 4.2.1 of~\cite{dembo}.) Let ${\mathbb Q}_n$ 
be a family of probability 
measures on a Hausdorff 
topological space ${\mathcal T}$ 
satisfying a large deviations 
principle with speed $n^{d}$ 
and rate function 
$K:{\mathcal T} \to {\mathbb R}$. 
If $F:{\mathcal T} \to {\mathbb R}$ is {continuous}, then the 
family ${\tilde {\mathbb Q}}_n$ 
of pushforwards of ${\mathbb Q}_n$ 
by $F$ (defined by ${\tilde {\mathbb Q}}_n(A) = {\mathbb Q}_n(F^{-1}(A))$ 
for measurable $A \subset {\mathbb R}$) satisfies a large 
deviations principle with 
speed $n^d$ and rate function 
$L:{\mathbb R} \to {\mathbb R}$,
$$L(y) := \inf_{x \in {\mathcal T}\,:\,F(x) = y}K(x).$$

We begin by proving Theorem~\ref{thm_large deviation principle}. 
First, we need the following lemmas.
\begin{lemma}\label{lem_FL}
For any $s \in [0,1]$, 
\begin{equation}\label{FL}
\sup_{t \in {\mathbb R}} 
\left[st - \log\left(\frac{1}{2}+\frac{1}{2}e^t\right)\right] = H_{\textup{bin}}(s).
\end{equation}
\end{lemma}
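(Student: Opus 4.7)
The plan is to recognize the right-hand side as a standard Legendre--Fenchel transform and carry out the elementary one-variable optimization. Write $g(t) = st - \log\bigl(\tfrac12 + \tfrac12 e^t\bigr)$. Since $\log(\tfrac12 + \tfrac12 e^t)$ is the logarithm of the moment generating function of a Bernoulli$(1/2)$ random variable, it is convex in $t$, so $g$ is concave and any stationary point is a global maximum.

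First I would handle the interior case $s \in (0,1)$. Differentiating gives
\[
g'(t) = s - \frac{e^t}{1+e^t},
\]
so the unique critical point is $t_* = \log\frac{s}{1-s}$, at which $\frac{e^{t_*}}{1+e^{t_*}} = s$ and $1+e^{t_*} = \frac{1}{1-s}$. Substituting back yields
\[
g(t_*) = s\log\tfrac{s}{1-s} - \log\tfrac{1+e^{t_*}}{2} = s\log s - s\log(1-s) + \log(1-s) + \log 2,
\]
which simplifies to $s\log s + (1-s)\log(1-s) + \log 2 = H_{\textup{bin}}(s)$. Concavity of $g$ (from $g''(t) = -e^t/(1+e^t)^2 < 0$) confirms this is the supremum.

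Next I would dispose of the boundary cases $s \in \{0,1\}$, where the critical equation has no finite solution. For $s=0$, $g(t) = -\log(\tfrac12 + \tfrac12 e^t)$ is strictly decreasing with $\lim_{t\to-\infty} g(t) = \log 2 = H_{\textup{bin}}(0)$, using the convention $0\log 0 = 0$. For $s=1$, $g(t) = t - \log(\tfrac12 + \tfrac12 e^t)$ is strictly increasing with $\lim_{t\to\infty} g(t) = \log 2 = H_{\textup{bin}}(1)$. In both cases the supremum (not attained) equals the claimed value.

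There is no real obstacle here: the result is a textbook Fenchel dual identity between the log-moment-generating function of a Bernoulli$(1/2)$ variable and the binary relative entropy with respect to the uniform distribution on $\{0,1\}$. The only mild care needed is the interpretation of $H_{\textup{bin}}$ at the endpoints $s=0,1$, which is why I would separate the interior argument from the boundary limits.
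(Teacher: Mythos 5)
Your proof is correct and follows essentially the same route as the paper's: differentiate, locate the unique critical point $t_* = \log\frac{s}{1-s}$, and substitute back to obtain $H_{\textup{bin}}(s)$, using concavity to conclude it is the global maximum. You are slightly more careful than the paper in separating out the boundary cases $s\in\{0,1\}$, where the supremum is a limit rather than an attained maximum; the paper silently subsumes these under the interior computation.
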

\begin{proof}
When $s \notin [0,1]$ the quantity in 
brackets has no upper bound in $t$. When $s \in [0,1]$, the maximum is attained 
when $t = \log s - \log(1-s)$, 
and plugging this back into~\eqref{FL} yields the result.
\end{proof}

\begin{lemma}\label{lem_Hstar}
Suppose $\theta: {\mathbb R} \to 
{\mathbb R}$ is Lipschitz continuous 
and let $g \in L^p({\mathcal C})$. Then 
\begin{equation*}
\lim_{n \to \infty} 
n^{-d} \sum_{I \in \Lambda_n} 
\theta\left(n^d\int_{{\mathcal C}_I}g(x)\,dx\right) = \int_{\mathcal C} \theta(g(x))\,dx.
\end{equation*}
\end{lemma}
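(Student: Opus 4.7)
The plan is to recognize the left-hand side as an integral of $\theta$ applied to a step-function approximation of $g$, and then pass to the limit via Lipschitz continuity plus an $L^1$ approximation argument. Specifically, define the step function
\begin{equation*}
g_n(x) = n^d \int_{{\mathcal C}_I} g(y)\,dy \qquad \text{for } x \in {\mathcal C}_I,
\end{equation*}
which is the cubewise average of $g$. Since each cube ${\mathcal C}_I$ has volume $n^{-d}$, I can immediately rewrite the left-hand side as $\int_{\mathcal C} \theta(g_n(x))\,dx$. The goal then becomes showing $\int_{\mathcal C} \theta(g_n) \to \int_{\mathcal C} \theta(g)$.

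Using the Lipschitz constant $L$ of $\theta$, I get the pointwise bound $|\theta(g_n(x)) - \theta(g(x))| \le L|g_n(x) - g(x)|$, so the problem reduces to showing $\|g_n - g\|_{L^1({\mathcal C})} \to 0$. Since ${\mathcal C}$ has finite measure, $g \in L^p({\mathcal C}) \subset L^1({\mathcal C})$, so this is a standard convergence-of-averages fact, but I want to present it self-contained.

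The approach I would take is a three-epsilon argument via density of continuous functions. Given $\epsilon > 0$, pick $h \in C({\mathcal C})$ with $\|g - h\|_{L^1} < \epsilon$, and let $h_n$ be its cubewise average. Jensen's inequality (or a direct triangle inequality on each cube) gives the $L^1$-contraction
\begin{equation*}
\|g_n - h_n\|_{L^1} \le \|g - h\|_{L^1} < \epsilon.
\end{equation*}
By uniform continuity of $h$ on the compact set ${\mathcal C}$, $h_n \to h$ uniformly, so $\|h_n - h\|_{L^1} \to 0$. Combining these via the triangle inequality yields $\limsup_n \|g_n - g\|_{L^1} \le 2\epsilon$, and since $\epsilon$ is arbitrary we conclude $g_n \to g$ in $L^1$.

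The argument is essentially routine, but if there is any delicate step it is verifying the $L^1$-contractivity of the averaging map $g \mapsto g_n$; this follows from Jensen applied cube by cube, namely $|n^d\int_{{\mathcal C}_I}(g-h)| \le n^d \int_{{\mathcal C}_I}|g-h|$, and then integrating over ${\mathcal C}_I$ and summing. No exponential tightness or large deviations machinery is needed here; the lemma is purely an approximation statement that will be used as a tool elsewhere in the proofs.
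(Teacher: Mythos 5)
Your proof is correct and follows essentially the same route as the paper's: both identify the left-hand side as $\int_{\mathcal C}\theta$ applied to the cubewise-averaging operator $A_n g$, use the Lipschitz constant to reduce to $\|A_n g - g\|_{L^1}\to 0$, and then establish this via uniform boundedness (your $L^1$-contraction) plus density of continuous functions and uniform convergence on continuous functions. Your writeup simply spells out the three-$\epsilon$ argument that the paper compresses into the phrase ``$A_n$ is a bounded operator and continuous functions are dense.''
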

\begin{proof}
Consider 
the operator $A_n:L^q({\mathcal C}) 
\to L^q({\mathcal C})$ defined by 
\begin{equation*}
A_n g = \sum_{I \in \Lambda_n} 
\left(n^d \int_{{\mathcal C}_I} g(x)\,dx\right){1}_{{\mathcal C}_I}.
\end{equation*}
Note that
\begin{equation*}
\int_{\mathcal C}\theta(A_n g(x))\,dx = n^{-d}\sum_{I \in \Lambda_n} 
\theta\left(n^d\int_{{\mathcal C}_I}g(x)\,dx\right).
\end{equation*}
Since $\theta$ is Lipschitz, for a constant $c>0$,
\begin{align}\begin{split}\label{theta}
\left|\int_{\mathcal C}\theta(A_n g(x))\,dx 
- \int_{\mathcal C}\theta(g(x))\,dx\right|
&\le \int_{\mathcal C}|\theta(A_n g(x))
-\theta(g(x))|\,dx \\
&\le c\int_{\mathcal C} |A_n g(x)-g(x)|\,dx.\end{split}
\end{align}
Clearly $A_n g \to g$ in norm when $g$ is 
continuous. Since $A_n$ is a bounded 
operator and continuous functions 
are dense in $L^q({\mathcal C})$, 
we see that $A_n g \to g$ in norm for 
any $g \in L^q({\mathcal C})$. 
Thus,  
the last expression in~\eqref{theta} vanishes 
as $n \to \infty$.
\end{proof}

\begin{proof}[Proof of Theorem~\ref{thm_large deviation principle}]
Recall that 
\begin{align}\begin{split}\label{YF}
{\mathcal Y} &:= \left\{f \in {\mathcal X}\,:\, 
f(x) \in [0,1] \text{ for a.e. }x\right\},\\
{\mathcal F} &:= \left\{f \in {\mathcal X}\,:\, \exists \epsilon>0 \text{ s.t. }
f(x) \in [\epsilon,1-\epsilon] \text{ for a.e. }x\right\}.\end{split}
\end{align}
We claim that 
${\mathcal Y}$ is compact. 
Note that ${\mathcal Y}$ 
is closed, convex and 
bounded in $L^p({\mathcal C})$. 
Thus, by the Banach-Alaoglu theorem, 
${\mathcal Y}$ is 
compact if $1<p<\infty$. 
Since the weak topology in 
$L^1({\mathcal C})$ is 
coarser than the weak 
topology in $L^p({\mathcal C})$
for $1<p<\infty$, the $p=1$ 
case follows. 
We follow Baldi's theorem; see
Theorem 4.5.3 of~\cite{dembo}. 
Let ${\mathbb E}_n$ be 
expectation associated 
to ${\mathbb P}_n$. 
Write $f^{\eta_n}:{\mathcal C} \to [0,1]$ 
for the function drawn from 
${\mathbb P}_n$ associated 
to $\eta_n:\Lambda_n \to \{0,1\}$. 
Thus,
$(\eta_n(I))_{I \in \Lambda_n}$ are 
iid Bernoulli-$1/2$ random variables. 
For any
$g \in L^{q}({\mathcal C})$, 
\begin{align*}
H^*(g)&:=\lim_{n\to \infty} n^{-d} \log 
{\mathbb E}_n\left[\exp\left(
n^d \int_{{\mathcal C}} f^{\eta_n}(x)g(x)\,dx\right)\right] \\
&= \lim_{n\to \infty} n^{-d} \log 
{\mathbb E}_n\left[\exp\left(n^d
\sum_{I \in \Lambda_n} \eta_n(I) \int_{{\mathcal C}_{I}} g(x)\,dx\right)\right]
\\
&= 
\lim_{n\to \infty} n^{-d} 
\log \prod_{I \in \Lambda_n}  
{\mathbb E}_n\left[\exp\left(
\eta_n(I) n^d   \int_{{\mathcal C}_{I}} g(x)\,dx\right)\right] \\
&= 
\lim_{n\to \infty} n^{-d} 
 \sum_{I \in \Lambda_n} \log
{\mathbb E}_n\left[\exp\left(
\eta_n(I) n^d  \int_{{\mathcal C}_{I}} g(x)\,dx\right)\right]\\
&= \lim_{n\to \infty} n^{-d} 
 \sum_{I \in \Lambda_n} \log
\left[\frac{1}{2} + \frac{1}{2}\exp\left(n^d \int_{{\mathcal C}_I}g(x)\,dx\right)\right]\\
&=  \int_{\mathcal C}\log\left(\frac{1}{2}+\frac{1}{2}e^{g(x)}\right)\,dx.
\end{align*}
The last equality follows 
from Lemma~\ref{lem_Hstar}, since $\theta(t) := \log(\frac{1}{2}+\frac{1}{2}e^t)$ is Lipschitz.
Notice ${\mathcal Y}$ is 
compact and the ${\mathbb P}_n$ 
are supported on ${\mathcal Y}$. 
In particular, ${\mathbb P}_n$
is exponentially 
tight. Thus  
(see Theorem 4.5.3 (a) of~\cite{dembo}) 
${\mathbb P}_n$ satisfies the 
large deviations upper bound 
in $L^p({\mathcal C})$ with 
rate function
\begin{align*}
H(f)&= \int_{\mathcal C} H_{\textup{bin}}(f(x))\,dx \\
 &= \sup_{g \in L^q({\mathcal C})}
\left[\int_{\mathcal C}f(x)g(x)\,dx -
\int_{\mathcal C}\log\left(\frac{1}{2}+\frac{1}{2}e^{g(x)}\right)\,dx 
\right],
\end{align*}
where we used Lemma~\ref{lem_FL} 
for the second equality. 
Since the weak topology is 
coarser than the norm topology, 
${\mathbb P}_n$ also satisfies 
the large deviations upper 
bound in ${\mathcal X}$. 
It is easy to check that the rate function
$H$ is nonnegative 
and lower semi-continuous. 
We now verify the remaining 
conditions in Baldi's theorem.
Let $f \in {\mathcal F}$, and 
define $h_f(x) = \log f(x)- \log(1-f(x))$ 
for $x \in {\mathcal C}$. 
Then $h_f$ is an 
exposing hyperplane for $f$, since
\begin{align}\begin{split}\label{hyper}
&\int_{\mathcal C} f(x)h_f(x)\,dx- H(f) - 
\left[\int_{\mathcal C} g(x)h_f(x)\,dx- H(g) \right] \\
&= 
\int_{\mathcal C} \left(g(x)\log\frac{g(x)}{f(x)}+ (1-g(x))\log \frac{1-g(x)}{1-f(x)}\right)\,dx >0
\end{split}
\end{align}
whenever $f \ne g$ on a set 
of positive measure. Clearly, 
$H^*(h_f)$ exists and
$H^*(\gamma h_f)$ exists 
and is finite for 
all $\gamma > 1$. 
If for all 
open sets $U \subset {\mathcal X}$, 
we have
\begin{equation}\label{inf}
\inf_{f \in U \cap {\mathcal F}} H(f)
= \inf_{f \in U} H(f),
\end{equation}
then 
(see Theorem 4.5.20 (b)-(c) 
of~\cite{dembo}) 
${\mathbb P}_n$ satisfies 
the large deviations lower 
bound in ${\mathcal X}$. 
The norm topology in 
$L^p({\mathcal C})$ 
is coarser than the uniform 
topology, since an 
$\epsilon$-ball in $L^p({\mathcal C})$ 
contains the corresponding 
uniform 
$\epsilon$-ball when $\epsilon<1$. Thus,  
to prove~\eqref{inf} it 
suffices to consider a set
$U$ open in the uniform 
topology. If $U \cap {\mathcal F} 
= \emptyset$ then $H(f) = \infty$ 
for all $f \in U$ and both 
sides of~\eqref{inf} equal $\infty$. 
Suppose then that
$f \in U$ with $H(f)<\infty$, and define
\begin{equation*}
f_\epsilon(x) = \begin{cases} 
f(x),& f(x) \in [\epsilon,1-\epsilon]\\
1-\epsilon, & f(x) > 1-\epsilon\\
\epsilon, & f(x) < \epsilon\end{cases}.
\end{equation*}
For $\epsilon$ sufficiently 
small, $f_\epsilon \in U \cap {\mathcal F}$ and $H(f_\epsilon) \le H(f)$. 
This shows that 
\begin{equation*}
\inf_{f \in U \cap {\mathcal F}} H(f)
\le \inf_{f \in U} H(f).
\end{equation*}
The reverse inequality 
holds since $U \cap {\mathcal F} \subset U$, so we are done.
\end{proof}

Now we turn to the proof of 
Theorem~\ref{thm_var}. 
We will need Lemmas~\ref{lem_cts} 
and~\ref{lem_equiv} below. 
\begin{lemma}\label{lem_cts}
The maps ${\mathcal Y} 
\to {\mathbb R}$ defined by 
\begin{equation*}
f \mapsto \int_{\mathcal C^2} f(x)f(y)\psi(|x-y|)\,dx\,dy, \qquad 
f \mapsto \int_{\mathcal C} f(x)\,dx
\end{equation*}
are continuous.
\end{lemma}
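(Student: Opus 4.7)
The linear functional $f\mapsto \int_{\mathcal C} f(x)\,dx$ is continuous on $\mathcal X$ (equipped with the weak $L^p$ topology) by the very definition of that topology, since the constant function $1$ lies in $L^q(\mathcal C)$ (as $\mathcal C$ has finite measure). The real content of the lemma is the continuity of the quadratic functional
$$F(f) = \int_{\mathcal C^2} f(x)\,f(y)\,\psi(|x-y|)\,dx\,dy.$$

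My plan is to approximate the kernel $K(x,y) := \psi(|x-y|)$ in $L^q(\mathcal C^2)$ by finite sums of separable kernels, turning $F$ into a uniform limit (on $\mathcal Y$) of polynomials in weakly continuous linear functionals. By Assumption~\ref{A1}, $K \in L^q(\mathcal C^2)$. Because the linear span of tensor products $\{a(x)b(y) : a,b\in L^q(\mathcal C)\}$ is dense in $L^q(\mathcal C^2)$ (a standard fact, obtainable from density of finite unions of measurable rectangles, or alternatively by first approximating $K$ by a continuous function and then invoking Stone--Weierstrass on the compact $\mathcal C^2$), for every $\epsilon > 0$ I can choose
$$K_\epsilon(x,y) = \sum_{k=1}^{N_\epsilon} a_k(x)\,b_k(y), \qquad a_k,b_k\in L^q(\mathcal C),$$
with $\|K - K_\epsilon\|_{L^q(\mathcal C^2)} < \epsilon$. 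Set $F_\epsilon(f) := \int_{\mathcal C^2} f(x)f(y) K_\epsilon(x,y)\,dx\,dy$, so that
$$F_\epsilon(f) = \sum_{k=1}^{N_\epsilon} \left(\int_{\mathcal C} f(x) a_k(x)\,dx\right)\left(\int_{\mathcal C} f(y) b_k(y)\,dy\right).$$
Each $F_\epsilon$ is a finite polynomial in weakly continuous linear functionals, hence continuous on $\mathcal X$, and therefore on $\mathcal Y$.

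To conclude I need the approximation to be uniform on $\mathcal Y$. Here the crucial structural fact is that $|f| \le 1$ a.e.\ for $f\in\mathcal Y$, which together with $|\mathcal C|=1$ gives $\|f\otimes f\|_{L^p(\mathcal C^2)} \le 1$. By H\"older's inequality,
$$|F(f) - F_\epsilon(f)| \le \|f\otimes f\|_{L^p(\mathcal C^2)}\,\|K - K_\epsilon\|_{L^q(\mathcal C^2)} < \epsilon$$
uniformly in $f\in\mathcal Y$. Thus $F|_{\mathcal Y}$ is a uniform limit of continuous functions on $\mathcal Y$ and so is itself continuous.

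The main (minor) obstacle is justifying the density of separable tensors in $L^q(\mathcal C^2)$; everything else is H\"older's inequality combined with the $L^\infty$ bound that $\mathcal Y$ provides for free. No compactness or measurability argument beyond Assumption~\ref{A1} is required, and in particular Riemann integrability of $\psi(|\cdot-\cdot|)$ is not needed at this step (it will be used later, for the exponential equivalence in Lemma~\ref{lem_equiv}).
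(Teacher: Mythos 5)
Your proof is correct and takes a genuinely different route from the paper's. The paper fixes a weakly convergent sequence $f_n \to f$ in ${\mathcal Y}$, writes the difference of the quadratic functionals as two cross terms, and handles one by weak convergence against the $L^q$ function $x \mapsto \int_{\mathcal C} f(y)\psi(|x-y|)\,dy$ (in $L^q$ by Jensen), the other by dominated convergence, using the $L^\infty$ bound on ${\mathcal Y}$ to dominate. Your approach instead approximates the kernel $\psi(|x-y|)$ in $L^q({\mathcal C}^2)$ by separable tensor sums, so that the energy becomes a uniform limit on ${\mathcal Y}$ (via H\"older and $\|f\otimes f\|_{L^p}\le 1$) of finite polynomials in weakly continuous linear functionals. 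What your route buys: it establishes continuity directly rather than sequential continuity, so it does not tacitly rely on metrizability of the weak topology on ${\mathcal Y}$ (the paper's sequence argument is fine because ${\mathcal Y}$ is $L^\infty$-bounded, hence sits inside a weakly metrizable ball of $L^2$, but this is left implicit there). What the paper's route buys: it is shorter and avoids the density-of-tensors input. One small gap to patch: the paper permits $q=\infty$ (when $\psi(|\cdot-\cdot|)$ is essentially bounded, so $p=1$), and separable tensors are not dense in $L^\infty({\mathcal C}^2)$. This is easily fixed: approximate $K$ in $L^{q'}({\mathcal C}^2)$ for any finite $q'\in(1,q]$ with conjugate $p'$, noting $K\in L^{q'}$ since ${\mathcal C}^2$ has finite measure and still $\|f\otimes f\|_{L^{p'}}\le 1$, so the uniform estimate goes through unchanged.
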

\begin{proof}
Let $\{f_n\}$ 
in ${\mathcal Y}$ 
converge to $f \in {\mathcal Y}$. 
From Assumption~\ref{A1}, 
$(x,y)\mapsto \psi(|x-y|)$ 
is in $L^q({\mathcal C}^2)$. 
By Jensen's inequality, it follows that
$x \mapsto\int_{\mathcal C}\psi(|x-y|)\,dy$ is in $L^q({\mathcal C})$, since
\begin{equation*}
\int_{\mathcal C}\left(
\int_{\mathcal C} \psi(|x-y|)\,dy\right)^q\,dx 
\le \int_{{\mathcal C}^2} 
\psi(|x-y|)^q\,dx\,dy < \infty.
\end{equation*}
By boundedness of $f$, 
$x \mapsto \int_{\mathcal C} f(y)\psi(|x-y|)\,dy$ is also in $L^q({\mathcal C})$ and thus
\begin{equation}\label{ab1}
\lim_{n\to \infty} 
\int_{\mathcal C} [f_n(x)-f(x)]\left(\int_{\mathcal C} f(y)\psi(|x-y|)\,dy\right)dx = 0.
\end{equation}
Notice also that since
$(x,y)\mapsto \psi(|x-y|)\in L^q({\mathcal C}^2)$, 
$y \mapsto \psi(|x-y|)$ 
is in $L^q({\mathcal C})$.
So since $\psi$ is integrable and 
$f_n$, $f$ are uniformly bounded, by
dominated convergence 
\begin{equation*}
\lim_{n\to \infty} \int_{\mathcal C}\left|\int_{\mathcal C}[f_n(y)-f(y)]\psi(|x-y|)\,dy\right|\,dx = 0.
 \end{equation*}
Thus, using uniform boundedness of $f_n$ again,
\begin{equation}\label{ab2}
\lim_{n\to \infty} 
\int_{\mathcal C}  f_n(x)
\left[\int_{\mathcal C}[f_n(y)-f(y)]\psi(|x-y|)\,dy\right]dx = 0.
\end{equation}
Combining~\eqref{ab1} and~\eqref{ab2} yields
\begin{equation*}
\lim_{n\to \infty} 
\int_{\mathcal C^2} [f_n(x)f_n(y)-f(x)f(y)]\psi(|x-y|)\,dx\,dy = 0.
\end{equation*}
Continuity of the other map is 
clear, so the proof is complete.
\end{proof}

Next we prove exponential 
equivalences for the 
sums defining $N_n(\eta)$ and 
$E_n(\eta)$.
\begin{lemma}\label{lem_equiv}
For any $\epsilon>0$, 
\begin{equation}\label{supexp1}
\limsup_{n\to \infty}n^{-d}\log{\mathbb P}_n\left(\left|n^{-d}\sum_{I \in \Lambda_n} \eta_n(I) -
\int_{\mathcal C} f^{\eta_n}(x)\,dx\right| \ge \epsilon\right) = -\infty
\end{equation}
and
\begin{align}\begin{split}\label{supexp2}
&\limsup_{n\to \infty}n^{-d}\log{\mathbb P}_n\left(\left|n^{-2d}\sum_{I,J \in \Lambda_n} 
\eta_n(I) \eta_n(J) \phi_n(I,J) \right.\right. \\
&\qquad\qquad\qquad\qquad \qquad-\left.\left. 
\int_{\mathcal C^2} f^{\eta_n}(x)f^{\eta_n}(y)\psi(|x-y|)\,dx\,dy \right| \ge \epsilon\right) = -\infty.
\end{split}
\end{align}
\end{lemma}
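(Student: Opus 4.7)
Both estimates are in fact deterministic: the quantities compared differ, uniformly in $\eta_n$, by a sequence tending to zero, so for any fixed $\epsilon > 0$ the probabilities in~\eqref{supexp1}--\eqref{supexp2} are identically zero once $n$ is large, forcing the limsups to equal $-\infty$.

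For~\eqref{supexp1}, since $f^{\eta_n}\equiv \eta_n(I)$ on $\mathcal{C}_I$ and $|\mathcal{C}_I| = n^{-d}$,
\begin{equation*}
\int_{\mathcal{C}} f^{\eta_n}(x)\,dx = \sum_{I\in \Lambda_n}\eta_n(I)|\mathcal{C}_I| = n^{-d}\sum_{I\in\Lambda_n}\eta_n(I),
\end{equation*}
so the two quantities in~\eqref{supexp1} are identically equal and the probability is zero for every $n$.

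For~\eqref{supexp2}, set $\phi(x,y) := \psi(|x-y|)$ and let $\tilde\phi_n(x,y) := \psi(n^{-1}|I-J|)$ on $\mathcal{C}_I \times\mathcal{C}_J$. Since $f^{\eta_n}(x)f^{\eta_n}(y) = \eta_n(I)\eta_n(J)$ is constant on each cell (of volume $n^{-2d}$), one has $E_n(\eta_n) = \int_{\mathcal{C}^2} f^{\eta_n}(x)f^{\eta_n}(y)\tilde\phi_n(x,y)\,dx\,dy$; together with $0 \le f^{\eta_n}\le 1$ this yields the deterministic, uniform-in-$\eta_n$ bound
\begin{equation*}
\Bigl|E_n(\eta_n) - \int_{\mathcal{C}^2}f^{\eta_n}(x)f^{\eta_n}(y)\phi(x,y)\,dx\,dy\Bigr| \le \|\tilde\phi_n - \phi\|_{L^1(\mathcal{C}^2)}.
\end{equation*}
It therefore suffices to prove $\|\tilde\phi_n - \phi\|_{L^1(\mathcal{C}^2)}\to 0$ under Assumption~\ref{A1}.

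For this, I would truncate at level $M$: let $\phi^{(M)}$ be $\phi$ clipped to $[-M,M]$ and let $\tilde\phi_n^{(M)}$ denote its analogous grid discretization. The triangle inequality gives
\begin{equation*}
\|\tilde\phi_n - \phi\|_{L^1} \le \|\tilde\phi_n^{(M)} - \phi^{(M)}\|_{L^1} + \|\phi^{(M)} - \phi\|_{L^1} + \|\tilde\phi_n^{(M)} - \tilde\phi_n\|_{L^1}.
\end{equation*}
For fixed $M$, Riemann integrability of $\phi$ together with Lebesgue's criterion makes $\phi^{(M)}$ continuous Lebesgue-a.e., so $\tilde\phi_n^{(M)}\to\phi^{(M)}$ pointwise a.e.\ and the first term tends to $0$ by bounded convergence. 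The second term tends to $0$ as $M\to\infty$ since $\phi\in L^q\subset L^1$. The last term is a Riemann sum for $\|\phi - \phi^{(M)}\|_{L^1}$; applying Riemann integrability to $|\phi|$ yields its convergence in $n$, giving the required smallness uniformly in $n$ as $M\to\infty$.

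The main obstacle is this last discretized-tail estimate near the singularity of $\psi$, where cells with $n^{-1}|I-J|$ small can carry substantial mass. Controlling it requires the full strength of Assumption~\ref{A1}, namely both the $L^q$ integrability and the Riemann integrability of $(x,y)\mapsto\psi(|x-y|)$.
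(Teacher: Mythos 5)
Your proof of \eqref{supexp1} is identical to the paper's: the two quantities agree exactly by construction of $f^{\eta_n}$, so the probability is zero for every $n$.

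For \eqref{supexp2}, you and the paper make the same key reduction: since $0\le f^{\eta_n}\le 1$, the random configuration drops out entirely, and the claim reduces to a deterministic discretization estimate on the pair potential. Concretely, you bound the difference by $\|\tilde\phi_n-\phi\|_{L^1({\mathcal C}^2)}$, whereas the paper bounds it by the cellwise error $\sum_{I,J}\bigl|n^{-2d}\phi_n(I,J)-\phi_n^{I,J}\bigr|$ with $\phi_n^{I,J}=\int_{{\mathcal C}_I\times{\mathcal C}_J}\psi(|x-y|)\,dx\,dy$; by the triangle inequality your quantity dominates theirs, so proving yours suffices, and both are of the same nature. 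Where the two write-ups differ is in what they say about why this error vanishes: the paper simply declares it ``easy to see'' from Riemann integrability, while you supply a truncation argument that isolates the singularity of $\psi$. That is the right instinct, and it makes explicit an issue the paper glosses over. However, your truncation does not fully discharge the estimate either: the crux is your third term $\|\tilde\phi_n^{(M)}-\tilde\phi_n\|_{L^1}$, which is a Riemann sum for the \emph{unbounded} function $(|\phi|-M)_+$, and convergence of Riemann sums to the (improper) integral is exactly what is at issue for an unbounded integrand — one cannot ``apply Riemann integrability to $|\phi|$'' to conclude it without circularity. In other words, the part of Assumption~\ref{A1} doing the real work in both proofs is the (somewhat informal) claim that Riemann sums of $\psi(|x-y|)$ converge despite the singularity. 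Under that reading your argument is correct and adds worthwhile detail to the paper's one-line justification; the truncation narrows the gap near the singularity but the tail estimate still leans on the same hypothesis the paper invokes.
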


\begin{proof}
By the definitions of $\eta_n$ and $f^{\nu_n}$,
\begin{equation*}
n^{-d}\sum_{I \in \Lambda_n} \eta_n(I) = 
\int_{\mathcal C} f^{\eta_n}(x)\,dx,
\end{equation*}
which implies~\eqref{supexp1}.
Define 
\begin{equation*}
\phi_n^{I,J} = \int_{{\mathcal C}_I \times {\mathcal C}_J} \psi(|x-y|)\,dx\,dy
\end{equation*}
and observe that
\begin{align}\begin{split}\label{Riemann}
&\left|n^{-2d}\sum_{I,J \in \Lambda_n} 
\eta_n(I) \eta_n(J) \phi_n(I,J) - 
\int_{{\mathcal C}^2} f^{\nu_n}(x)f^{\nu_n}(y)\psi(|x-y|)\,dx\,dy \right| \\
&= \left|\sum_{I,J \in \Lambda_n} 
\eta_n(I) \eta_n(J)\left(n^{-2d}\phi_n(I,J)
-\phi_n^{I,J}\right) \right| \\
&\le \sum_{I,J \in \Lambda_n}
\left|n^{-2d}\phi_n(I,J)-\phi_n^{I,J}\right|.\end{split}
\end{align}
Using Riemann integrability of 
$(x,y)\mapsto \psi(|x-y|)$, it 
is easy to see that the 
last expression in~\eqref{Riemann} 
is less than $\epsilon$ for 
sufficiently large $n$. This 
implies~\eqref{supexp2}.
\end{proof}

\begin{proof}[Proof of Theorem~\ref{thm_var}]
This is an immediate consequence of 
the contraction principle~\cite{dembo} along with 
Theorem~\ref{thm_large deviation principle}, Lemma~\ref{lem_cts}, 
and Lemma~\ref{lem_equiv}. 
\end{proof}

Now we are ready to prove Theorem~\ref{thm_EL}.
\begin{proof}[Proof of Theorem~\ref{thm_EL}]
Since ${\mathcal Y}$ is compact and
 \begin{equation*}
 f\in {\mathcal Y} \mapsto \int_{{\mathcal C}^2} 
 f(x)f(y)\psi(|x-y|)\,dx\,dy, \qquad 
 f\in {\mathcal Y} \mapsto \int_{\mathcal C} f(x)\,dx
 \end{equation*} 
 are continuous, ${\mathcal Y}_{\xi,\rho}$ 
 is compact. Thus, optimizers of~\eqref{var_prob} exist when 
 $(\xi,\rho)\in \Omega$. 
 Suppose now that $f_* \in {\mathcal F}$
 is an optimizer of~\eqref{var_prob} 
 for some $(\xi,\rho)$. 
For the remainder of the proof we will 
equip ${\mathcal Y}$ 
with the topology induced by the uniform norm. Thus, $f_*$ is in the 
interior of ${\mathcal Y}$.
To obtain the Euler-Lagrange 
equations~\eqref{EL_rearrange} 
we follow Theorem 9.1 of~\cite{clarke}. 
The multiplier rule there states
that there exist $\beta, \mu \in {\mathbb R}$ and $\nu \in \{0,1\}$ 
such that $(\beta,\mu,\nu) \ne (0,0,0)$ and for $f = f_*$ and all $\delta f \in L^{\infty}({\mathcal C})$,
\begin{align}\begin{split}\label{EL}
0 &= \beta \int_{\mathcal C}\left(\int_{\mathcal C} f(y)\psi(|x-y|)\,dy\right)\delta f(x)\,dx \\
&\qquad\qquad + 
\mu \int_{\mathcal C} \delta f(x)\,dx 
- \nu\int_{\mathcal C} H_{\textup{bin}}'(f(x))\delta f(x)\,dx,
\end{split}
\end{align}
provided the Frech{\'e}t 
derivatives in~\eqref{EL} are continuous 
for $f \in {\mathcal F}$. 
Continuity 
of the second Frech{\'e}t derivative 
is obvious. Continuity of
the first Frech{\'e}t derivative 
follows from integrability of $(x,y) \mapsto \psi(|x-y|)$, and continuity of the 
third Frech{\'e}t derivative follows from uniform continuity of  
$H_{\textup{bin}}'$ on $[\epsilon,1-\epsilon]$ for each $\epsilon>0$. Thus,
\begin{equation}\label{EL2}
\beta \int_{\mathcal C}f_*(y)\psi(|x-y|)\,dy 
+ \mu - \nu H_{\textup{bin}}'(f_*(x)) = 0
\end{equation}
for a.e. $x$. When $\nu = 1$, this is 
a rearrangement of~\eqref{EL_rearrange}. 
If $\nu = 0$ then $\beta \ne 0$ and
\begin{equation*}
\int_{\mathcal C} f_*(y)\psi(|x-y|)\,dy = \gamma
\end{equation*}
for a.e. $x$, where $\gamma = -\mu/\beta$. Note that
\begin{equation*}
\xi = \int_{{\mathcal C}^2} f_*(x)f_*(y)\psi(|x-y|)\,dy\,dx = \gamma \int_{\mathcal C} f_*(x)\,dx = \gamma \rho,
\end{equation*}
so in fact $\gamma = \xi/\rho$.
\end{proof}

\begin{proof}[Proof of Corollary~\ref{cor_cts}]
Let $f$ satisfy~\eqref{EL_rearrange}. Since $\gamma(t):= \psi(|t|)$ is locally integrable,
\begin{equation*}
(f \ast \gamma)(x) \equiv \int_{\mathcal C} f(y)\psi(|x-y|)\,dy
\end{equation*}
is continuous. Now continuity of 
$f$ follows from the Euler-Lagrange 
equation~\eqref{EL_rearrange}. 
\end{proof}

\begin{proof}[Proof of Lemma~\ref{lem_int}]
Let 
$0 < \rho \le 1/4$ and define 
\begin{equation*}
f_1(x) = 1_{[0,\rho]}(x), \quad 
f_2(x) \equiv \rho,\quad \text{and} 
\quad f_3(x) = 1_{[0,\rho/2]}+1_{[1/2-\rho/2,1/2]}.
\end{equation*}
Then $\rho = \int_{[0,1]} f_i(x)\,dx$ 
for $i=1,2,3$, and 
\begin{align}\begin{split}\label{ineq}
&\xi(f_1) = \frac{2\rho^{2-r}}{(1-r)(2-r)}, \\
&\xi(f_2) = \lambda \rho^2 = 2\frac{4^{r-1}\rho^2}{1-r}+ \frac{M\rho^2}{2}\\
&\xi(f_3) = 
\frac{4\left(\frac{\rho}{2}\right)^{2-r}}{(1-r)(2-r)} + \frac{M\rho^2}{2}.
\end{split}
\end{align}
Observe that when $\rho = 1/4$, 
\begin{equation}\label{ineq2}
\xi(f_2) = \frac{4^{r-5/2}}{1-r} + \frac{M\rho^2}{2} 
<  \frac{4^{3r/2-2}}{(1-r)(2-r)}+\frac{M\rho^2}{2}= \xi(f_3),
\end{equation}
where the inequality can be 
checked by straightforward calculus. 
If $\epsilon > 0$ is sufficiently 
small, $\xi(f_2)<\xi(f_3)$ 
whenever $\rho \in (1/4-\epsilon,1/4)$. 
Moreover, when $M$ is sufficiently large, 
$\xi(f_1) < \xi(f_2)$. 
All values of $\xi$ 
between $\xi(f_1)$ and $\xi(f_3)$ 
are attainable by, for example, 
taking convex combinations of 
$f_1$ and $f_3$. 
\end{proof}

To see that Lemma~\ref{lem_int} 
holds for a smoothed function 
of $\psi$, let $\gamma$ 
be a bounded function supported 
on $(1/4-\delta,1/4+\delta)$ 
such that $\gamma+\psi$ is 
smooth. Then for sufficiently 
small $\delta > 0$, the 
arguments above still go through.

\begin{proof}[Proof of Theorem~\ref{thm_trans}] Let 
$\psi$ be as in Lemma~\ref{lem_int}. 
Take $\rho \in (1/4-\epsilon,1/4)$,  
and let $f$ be an optimizer of~\eqref{var_prob} at 
$(\xi,\rho) \in \textup{int}(\Omega)$. 
We can write $
f(x) = \rho + \delta f(x)$, 
where
\begin{equation}\label{inteq0} 
\int_{[0,1]} \delta f(x)\,dx = 0.
\end{equation}   
Observe that
\begin{align*}
H(f)-H(\rho) &=\int_{[0,1]} 
H_{\textup{bin}}(\rho+\delta f(x))\,dx - H_{\textup{bin}}(\rho) \\
&= \int_{[0,1]} \left[H_{\textup{bin}}(\rho+\delta f(x)) - H_{\textup{bin}}'(\rho)\delta f(x) - H_{\textup{bin}}(\rho)\right]\,dx \\
&\ge c \int_{[0,1]}\delta f(x)^2 \,dx,
\end{align*}
where by convexity,
\begin{equation*}
c = \min_{t\in [-\rho,1-\rho]\setminus\{0\}} \frac{H_{\textup{bin}}(\rho+t)-H_{\textup{bin}}'(\rho)t-H_{\textup{bin}}(\rho)}{t^2} >0.
\end{equation*}
Note that $H(\rho) \le H(f)$ 
with equality if and only if 
$f \equiv \rho$ a.e. It follows that the optimizer 
of~\eqref{var_prob} at 
$(\lambda \rho^2,\rho)$ 
is the constant function 
with value $\rho$. Thus,
\begin{align*}
\delta S := S(\xi,\rho)-S(\lambda\rho^2,\rho)
 &= H(\rho)-H(f)
 \\
 &\le -c \int_{[0,1]}\delta f(x)^2 \,dx.
\end{align*}
Now note that
\begin{align}\begin{split}\label{xi}
\delta \xi &:= \xi(f)-\xi(\rho) \\
&= 
\int_{[0,1]^2}
(\rho+\delta f(x))(\rho+\delta f(y))
\psi(|x-y|)\,dx\,dy -
 \int_{[0,1]^2}
\rho^2 \psi(|x-y|)\,dx\,dy \\
&= 
2\rho\int_{[0,1]^2} 
\delta f(x)\psi(|x-y|)\,dx\,dy 
+ \int_{[0,1]^2} 
\delta f(x)\delta f(y)\psi(|x-y|)\,dx\,dy 
\\
&= \int_{[0,1]^2} 
\delta f(x)\delta f(y)\psi(|x-y|)\,dx\,dy,
\end{split}
\end{align}
with the last equality coming 
from~\eqref{inteq0} and the fact that for each $x \in [0,1]$,
\begin{equation*}
\int_{[0,1]} \psi(|x-y|)\,dy = \lambda.
\end{equation*}
Since $r<1/2$, the 
integral kernel $\Psi$ defined by
\begin{equation*}
\Psi f(y) = \int_{[0,1]}f(x)\psi(|x-y|)\,dx
\end{equation*}
is a Hilbert-Schmidt operator 
on $L^2[0,1]$. Thus,
\begin{equation*}
\left|\int_{[0,1]^2} 
\delta f(x)\delta f(y)\psi(|x-y|)\,dx\,dy\right| \le \sigma\int_{[0,1]}  \delta f(x)^2\,dx,
\end{equation*}
where $\sigma$ is the 
spectral radius of $\Psi$. 
Putting this in~\eqref{xi} yields
\begin{equation*}
|\delta \xi| \le \sigma\int_{[0,1]}  \delta f(x)^2\,dx.
\end{equation*}
Combining the estimates for $\delta S$ 
and $\delta \xi$, we get
\begin{equation*}
\delta S \le -\frac{c}{\sigma}|\delta \xi|.
\end{equation*}
Thus, 
$S$ is not differentiable 
at $(\lambda\rho^2,\rho) \in \Omega$.
\end{proof}

\section*{Acknowledgements}

The authors would like to thank 
R. Mark Bradley, 
Olivier Pinaud, Dan Pirjol, Charles Radin and 
Clayton Shonkwiler for helpful comments and suggestions. D.~Aristoff gratefully acknowledges support from the National Science Foundation via the award
NSF-DMS-1522398. L. Zhu gratefully acknowledges support from the National Science Foundation via the award
NSF-DMS-1613164. 
The authors 
would also like to thank 
an anonymous referee for a 
careful reading of the article 
and helpful suggestions.


%
%
%
%

\end{document}